\documentclass[a4,a4wide]{article}

\usepackage{graphicx}
\usepackage{paralist}

\usepackage{ifthen}
\usepackage{url}
\usepackage{amssymb}
\usepackage{amsmath}

\usepackage{xparse}
\usepackage{amsmath}

\usepackage[usenames,dvipsnames]{xcolor}
\usepackage{xspace}
\usepackage{datatool}
\usepackage{glossaries}

\newcommand{\m}[1]{\ensuremath{#1}\xspace}
\newcommand{\trval}[1]{\m{\mbox{\bf #1}}}


	\newcommand{\limplies}{\m{\Rightarrow}}
	
	\newcommand{\lequiv}{\m{\Leftrightarrow}}

	\newcommand{\lrule}{\m{\leftarrow}}
	\newcommand{\cause}{\m{\stackrel{c}{\lrule}}}
	
	\newcommand{\ltrue}{\trval{t}}
	\newcommand{\lfalse}{\trval{f}}
	\newcommand{\lunkn}{\trval{u}}
	


	\newcommand{\voc}{\m{\Sigma}}

	\newcommand{\struct}{\m{I}}
	
	\newcommand{\I}{\m{\mathcal{I}}}
	
	\newcommand{\J}{\m{\mathcal{J}}}

	\newcommand{\theory}{\m{\mathcal{T}}}


	\newcommand{\D}{\m{\Delta}}

	\NewDocumentCommand\inter{g+g}{%
	  \IfNoValueTF{#1}
	    {\struct}
	    {\m{#1^{#2}}}}



	\newcommand{\xxx}{\m{\overline{x}}}
	\newcommand{\yyy}{\m{\overline{y}}}
	
	\newcommand{\ddd}{\m{\overline{d}}}

	\newcommand{\ttt}{\m{\overline{t}}}





	\newcommand{\nat}{\m{\mathbb{N}}}
	
	\renewcommand{\int}{\m{\mathbb{Z}}}

	\newcommand{\leqp}{{\m{\,\leq_p\,}}}
	\newcommand{\geqp}{{\m{\,\geq_p\,}}}




	\NewDocumentCommand\subs{g+g}{%
	  \IfNoValueTF{#1}
	    {\m{/}}
	    {\m{#1/ #2}}}


	\newcommand{\logicname}[1]{\text{\sc #1}\xspace}

	\newcommand{\idp}{\logicname{IDP}}


	\newcommand{\fodot}{\logicname{FO(\ensuremath{\cdot})}}
	
	\newcommand{\foid}{\logicname{FO(\ensuremath{ID})}}
	
	\newcommand{\foidplus}{\logicname{C-Log}} 
	\newcommand{\clog}{\logicname{C-Log}}
	\newcommand{\foclog}{\logicname{FO(\clog)}}

	\newcommand{\fo}{\FO}
	\newcommand{\esoid}{\logicname{\ensuremath{\exists}SO(\ensuremath{ID})}}

\newcommand{\ouracronym}[3]{%
	\newacronym{#1}{#2}{#3}
	\expandafter\newcommand\csname #1\endcsname{\gls{#1}\xspace}%
}
	\ouracronym{FO}{FO}{first-order logic}
	\ouracronym{MX}{MX}{Model Expansion}
	\ouracronym{MO}{MO}{Model Optimization}
	\ouracronym{ASP}{ASP}{Answer Set Programming}
	\ouracronym{TP}{TP}{Theorem Proving}
	\ouracronym{LP}{LP}{Logic Programming}
	\ouracronym{CP}{CP}{Constraint Programming}
	\ouracronym{KR}{KR}{Knowledge Representation}
	\ouracronym{CSP}{CSP}{Constraint Satisfaction Problem}
	\ouracronym{SMT}{SMT}{SAT Modulo Theories}
	\ouracronym{KBS}{KBS}{knowledge-base system}
	\ouracronym{NNF}{NNF}{Negation Normal Form}
	\ouracronym{FNNF}{FNNF}{Flat Negation Normal Form}
	\ouracronym{DefNNF}{DefNNF}{Definition Negation Normal Form}
	\ouracronym{CDCL}{CDCL}{Conflict-Driven Clause-Learning}
	\ouracronym{LCG}{LCG}{Lazy Clause Generation}



%

	\makeatletter
	\def\ifenv#1{
	\def\@tempa{#1}%
	\def\@ttempa{#1*}%
	\ifx\@tempa\@currenvir
	\expandafter\@firstoftwo
	\else
	\expandafter\@secondoftwo
	\fi
	}
	\makeatother

	\newcommand{\ddrule}[4]{\ensuremath{#1 \leftarrow #2 & \{#3\} & #4}}
	\newcommand{\drule}[2]{\ensuremath{#1 & \leftarrow & #2}}

	\newcommand{\darule}[4]{\ensuremath{#1 \leftarrow #2 & \{#3\} & #4}}
	\newcommand{\arule}[2]{\ensuremath{#1 \, &\leftarrow \, #2}}

	\newenvironment{ldef}{\left\{\begin{array}{l@{ \,}l@{\,}l}}{\end{array}\right\}}
	\newenvironment{ltheo}{\[\begin{array}{l}}{\end{array}\]\ignorespacesafterend}

	\newcommand{\LNDRule}[2]{
	\ifenv{array}
	{\drule{#1}{#2}}
	{ \ifenv{align}
		{\arule{#1}{#2}}
		{\ifenv{align*}
		{\arule{#1}{#2}}
		{ERROR: using LDRule in unsupported environment: \@currenvir}
		}
	}
	}

	\newcommand{\LDRule}[4]{
	\ifenv{array}
	{\ddrule{#1}{#2}{#3}{#4}}
	{ \ifenv{align}
		{\darule{#1}{#2}{#3}{#4}}
		{\ifenv{align*}
		{\darule{#1}{#2}{#3}{#4}}
		{ERROR: using LDRule in unsupported environment: \@currenvir}
		}
	}
	}

	\NewDocumentCommand\LRule{m+g+g+g}{%
		\IfNoValueTF{#2}%
		{#1.&}{%
		\IfNoValueTF{#3}
		{\LNDRule{#1}{#2.}}
		{\LDRule{#1}{#2.}{#3}{#4}}%
		}
	}


	\NewDocumentCommand\CLRule{m+g}{%
	\ifenv{array}
	{\cdrule{#1}{#2}}
	{ \ifenv{align}
		{\carule{#1}{#2}}
		{\ifenv{align*}
			{\carule{#1}{#2}}
			{ERROR: using CLRule in unsupported environment: \@currenvir}
		}
	}
	}

	\NewDocumentCommand\carule{m+g}{%
		\IfNoValueTF{#2}
			{\ensuremath{#1.}}
			{\ensuremath{#1 \, &\cause \, #2}}}
	\NewDocumentCommand\cdrule{m+g}{%
		\IfNoValueTF{#2}
			{\ensuremath{#1.}}
			{\ensuremath{#1 & \cause & #2}}}



	\newcommand{\algrule}[4]{
	\hbox{{#1}:}& 
	\quad #2 ~\longrightarrow~ #3 
	\hbox{~ if } #4\\
	}

	\newcommand{\AlgoRule}[4]{
	\ifenv{array}
	{\algrule{#1}{#2}{#3}{#4}}
		{ERROR: using AlgoRule in unsupported environment: \@currenvir}
	}

\newcommand{\commentstyle}{\color{Gray}}

	\usepackage[final]{listings}

	\lstdefinelanguage{idp}{
		morekeywords=[1]{namespace,vocabulary,theory,structure,procedure,term,set,formula, spec, specification},
		morekeywords=[2]{include,using,type,isa,contains,partial,extern,LFD,GFD,constructed,from,constraint,func,pred,supertype,of,subtype,define},
		morekeywords=[3]{int,float,char,string,nat},
		morekeywords=[4]{if,then,else,for,end},
		morecomment=[s]{/*}{*/},	
		morecomment=[l]{//}
	}
	\lstset{
		language=idp,
		tabsize=3,
		basicstyle=\small,
		frame=none,
		frame=single,
		showstringspaces=false,
		breaklines = true,
		commentstyle=\commentstyle,
		keywordstyle=[1]\color{BrickRed}\bfseries,
		keywordstyle=[2]\color{OliveGreen}\bfseries,
		keywordstyle=[3]\color{Blue}\bfseries,
		keywordstyle=[4]\color{Violet}\bfseries,
		literate={~} {$\sim$}{1}
	}


	\newcommand{\ignore}[1]{}

	\newboolean{nocomments}
	\setboolean{nocomments}{false}

	\newboolean{commentmargin}
	\setboolean{commentmargin}{true}

	\newcommand{\namedcomment}[3]{
		\ifthenelse{\boolean{nocomments}}
		{} 
		{ 
			\ifthenelse{\boolean{commentmargin}}
				{ {\color{#3} \marginpar{\color{#3}\sc #2}#1}  } 
				{  {\color{#3} {\sc #2}: #1}  } 
		}
	}
	\newcommand{\mnamedcomment}[3]{\ifthenelse{\boolean{nocomments}}{}{{\marginpar{ \color{#3}{\sc #2}:#1}}}}

	\newcommand{\todo}[1]{\namedcomment{#1}{TODO}{blue}}

	\newcommand{\bart}[1]{\namedcomment{#1}{bb}{red}}
	
	\newcommand{\marc}[1]{\namedcomment{#1}{md}{orange}}

	\usepackage{soul}



\newcommand{\keyword}[2]{%
	\expandafter\newcommand\csname #1\endcsname{#2\xspace}%
	\expandafter\newcommand\csname #1s\endcsname{#2s\xspace}%
	\expandafter\newcommand\csname #1ness\endcsname{#2ness\xspace}%
}
\usepackage{etoolbox}

\newcommand\setcitation[2]{%
  \csdef{mycommoncitation#1}{#2}}
\newcommand\getcitation[1]{%
  \csuse{mycommoncitation#1}}

\setcitation{IDP}{WarrenBook/DeCatBBD14}
\setcitation{fodot}{tocl/DeneckerT08}
\setcitation{CP}{Apt03}
\setcitation{EZCSP}{lpnmr/Balduccini11}
\setcitation{KR}{Baral:2003}
\setcitation{ASPComp2}{lpnmr/DeneckerVBGT09}
\setcitation{ASPComp3}{LPNRM/Calimeri11}
\setcitation{ASPComp4}{conf/lpnmr/AlvianoCCDDIKKOPPRRSSSWX13}
\setcitation{CPSupport}{ictai/DeCat13}
\setcitation{functionDetection}{iclp/DeCatB13}
\setcitation{fodot2asp}{Denecker12}
\setcitation{Inca}{aaai/DrescherW11}
\setcitation{csp2asp}{ijcai/DrescherW11}
\setcitation{DPLLT}{cav/GanzingerHNOT04}
\setcitation{AspInPractice}{synthesis/2012Gebser}
\setcitation{clasp}{ai/GebserKS12}
\setcitation{oclingo}{kr/GebserGKOSS12}
\setcitation{gringo}{lpnmr/GebserST07}
\setcitation{cmodels}{aaai/GiunchigliaLM04}
\setcitation{inputster}{tplp/Jansen13}
\setcitation{DLV}{tocl/LeonePFEGPS06}
\setcitation{LearningPaper}{corr/Blockeel13}
\setcitation{clog}{iclp/BogaertsVDV14}
\setcitation{AFT}{DeneckerBV12}
\setcitation{}{}
\setcitation{}{}
\setcitation{}{}
\setcitation{}{}
\setcitation{}{}
\setcitation{}{}
\setcitation{}{}
\setcitation{}{}
\setcitation{}{}
\setcitation{}{}
\setcitation{}{}
  

\newcommand\mycite[1]{%
      \ifcsname mycommoncitation#1\endcsname%
   \cite{\getcitation{#1}}%
  \else%
    {\color{blue}ERROR: undefined citation key ``#1''}
  \fi%
}	
  
\usepackage{amsthm}



\newcommand{\univBQ}[2]{\m{\forall #1 [#2]}}
\newcommand{\exisBQ}[2]{\m{\exists #1 [#2]}}


\newcommand{\pred}[1]{\m{\mathrm{#1}}}

\NewDocumentCommand\lattice{g+g}{%
  \IfNoValueTF{#1}
    {\m{L}}
    {\m{\left(#1,#2\right)}}%
}

\newcommand{\Bilat}[1]{\m{#1^2}}
\NewDocumentCommand\bilat{g}{%
  \IfNoValueTF{#1}
    {\Bilat{\lattice}}
    {\Bilat{#1}}%
}

\newcommand{\cBilat}[1]{\m{#1^c}}
\NewDocumentCommand\cbilat{g}{%
  \IfNoValueTF{#1}
    {\cBilat{\lattice}}
    {\cBilat{#1}}%
}

\newcommand{\Operator}{\m{O}}
\NewDocumentCommand\Op{g}{%
  \IfNoValueTF{#1}
    {\m{\Operator}}
    {\m{\Operator \left(#1\right)}}%
}

\newcommand{\Approximator}{\m{A}}
\NewDocumentCommand\Ap{g}{%
  \IfNoValueTF{#1}
    {\m{\Approximator}}
    {\m{\Approximator \left(#1\right)}}%
}

\newcommand{\calo}{\m{{I_o}}}

\newcommand\struclatp[1]{\m{\lattice^{\trans{\voc}}_{#1}}}
	
\NewDocumentCommand\struclat{g}{%
	\IfNoValueTF{#1}
		{\m{\lattice^\voc_\calo}}
		{\m{\lattice^\voc_{#1}}}
}
\NewDocumentCommand\domstruclat{g}{%
	\IfNoValueTF{#1}
		{\m{\lattice^\voc_\calo}}
		{\m{\lattice^\voc_{#1}}}
}

\newcommand{\indomain}{\m{\mathcal{U}}}

\newcommand{\creation}{\m{\mathrm{Create}}}
\newcommand{\selected}{\m{\mathrm{Sel}}}
\newcommand{\choosefirst}{\m{\mathrm{Choose1}}}
\newcommand{\ctinter}[2]{\m{#1_{\mathrm{ct}}^{#2}}}

\newcommand{\ptinter}[2]{\m{#1_{\mathrm{pt}}^{#2}}}

\NewDocumentCommand\proj{g}{%
	\IfNoValueTF{#1}
		{\m{p}}
		{\m{p(#1)}}
}

\newcommand{\domain}{\m{D}}

\newcommand{\selection}{\selected}

\newcommand{\valu}[2]{\m{\mathrm{eff}_{#1,#2}}}
\newcommand{\valich}{\m{\mathrm{eff}_{\I,\ch}}}

\newcommand{\valueof}[1]{\m{\valich(#1)}}
\newcommand{\valueofs}[1]{\m{\mathrm{eff}_{\struct,\ch}(#1)}}
\newcommand{\Valueof}[3]{\m{\valu{#1}{#2}(#3)}}

\newcommand{\Apch}{\m{\Ap_{\ch}}}

\newcommand{\modelsaft}{\m{\models}}

\newacronym{CNF}{CNF}{Causal Normal Form}

\keyword{naming}{naming}
\keyword{Naming}{Naming}
\keyword{Chfun}{$\D$-selection}
\keyword{chfun}{$\D$-selection}
\keyword{Chpfun}{$\relat{\D}$-selection}
\keyword{chpfun}{$\relat{\D}$-selection}
\keyword{welldef}{well-defined}
\newcommand{\ch}{\m{\zeta}}

\newcommand{\chn}[1]{\m{\ch^{new}_{#1}}}
\newcommand{\chs}[1]{\m{\ch^{sel}_{#1}}}

\newcommand{\cho}[1]{\m{\ch^{or}_{#1}}}
\newcommand{\chin}{\m{\ch^{in}}}
\newcommand{\domainof}[1]{\m{D^{#1}}}

\newcommand{\cct}[1]{\m{#1_{ct}}}
\newcommand{\cpt}[1]{\m{#1_{pt}}}

\newcommand{\psetsym}{\m{\mathcal{S}}}

\newcommand{\resset}[2]{\m{r(#1,#2)}}

\usepackage{amsthm}

\theoremstyle{plain}
\newtheorem{thm}{Theorem}[section]

\newtheorem{corollary}[thm]{Corollary}

\newtheorem{lemma}[thm]{Lemma}

\newtheorem{definition}[thm]{Definition}

\theoremstyle{definition}

\newtheorem{example}[thm]{Example}

\newtheorem{ex*}{Example}
\newcommand{\cont}{continued}

\newtheoremstyle{example_contd}
{0.5\topsep} 
{0.5\topsep} 
{\normalfont}
{0pt}
{\bfseries}
{.}
{5pt plus 1pt minus 1pt}
{\thmname{#1} \thmnumber{ #2}\thmnote{#3} (\cont)}

\newtheoremstyle{plain_contd}
{0.5\topsep} 
{0.5\topsep} 
{\itshape}
{0pt}
{\bfseries}
{.}
{5pt plus 1pt minus 1pt}
{\thmname{#1} \thmnumber{ #2}\thmnote{#3} (\cont)}

\theoremstyle{example_contd} 

\theoremstyle{plain_contd} 

\theoremstyle{plain}


\RenewDocumentCommand\CLRule{m+g}{%
		\IfNoValueTF{#2}
			{\m{& #1.}}
			{\m{& #1 \cause #2 }}}

\lstdefinelanguage{br}{
		morekeywords=[1]{if,then,else,for,end,while,when,rule},
	}
	\lstset{
		language=br,
		tabsize=3,
		basicstyle=\ttfamily,
		frame=none,
		frame=single,
		showstringspaces=false,
		breaklines = true,
		commentstyle=\commentstyle,
		keywordstyle=[1]\color{BrickRed}\bfseries,
	}


	\newcommand{\Catom}[1]{#1}
	\NewDocumentCommand\Call{g+g+g}{%
	\IfNoValueTF{#1}%
	    {\m{\mathbf{All}}}%
	    {\m{\mathbf{All\,}#1[#2]: #3}}%
	 }

	\NewDocumentCommand\Cand{g+g}{%
	\IfNoValueTF{#1}%
	    {\m{\mathbf{And}}}%
	    {\m{#1\mathbf{\,And\,}#2}}%
	}
	\NewDocumentCommand\Csel{g+g+g+g}{%
	\IfNoValueTF{#1}%
	    {\m{\mathbf{Select}}}%
		{\IfNoValueTF{#4}%
			{\m{\mathbf{Select\,}#1[#2]: #3}}%
			{\m{\mathbf{Select}_{#1}\,#2[#3]: #4}}%
	}}
	\NewDocumentCommand\Cor{g+g+g+g+g}{%
	\IfNoValueTF{#1}%
		{\m{\mathbf{Or}}}%
		{\m{#1\mathbf{\,Or\,}#2}%
			\IfValueTF{#3}{\mathbf{\,Or\,}#3%
				\IfValueTF{#4}{\mathbf{\,Or\,}#4%
					\IfValueTF{#5}{\mathbf{\,Or\,}#5}{}%
				}{}%
			}{}%
		}%
	}
	\NewDocumentCommand\Cnew{g+g+g}{%
	  \IfNoValueTF{#1}%
	    {\m{\mathbf{New}}}%
	    {\IfNoValueTF{#3}%
			{\m{\mathbf{New\,}#1: #2}}%
			{\m{\mathbf{New\,}#1[#2]: #3}}}}%

	\NewDocumentCommand\Cif{g+g}{%
	\IfNoValueTF{#1}%
	    {rule}%
	    {\m{#2 \leftarrow #1 }}%
	}



\keyword{CEE}{CEE}

\renewcommand{\CLRule}[2]{\Cif{#2}{#1}}

\renewcommand{\subs}[2]{#2:#1}

\renewcommand{\struct}{\m{J}}

\newcommand{\tcaused}{endogenous\xspace}
\newcommand{\topen}{exogenous\xspace}

\newcommand{\xemph}[1]{\emph{#1}} 

\keyword{cset}{effect set} 
\keyword{poscset}{possible effect set} 

 \setboolean{commentmargin}{false}

\newcommand{\CVC}[2]{\m{#1\langle#2\rangle}}

\Catom

\newcommand{\relat}[1]{\m{rel(#1)}}

\newcommand{\trans}[1]{\m{#1^t_\D}}
\newcommand{\sel}[1]{\m{#1^s_\D}}
\newcommand{\seltheo}{\m{\pred{S}_\D}}

\newcommand{\initpred}{\m{\mathcal{S}}}

\newcommand{\PPP}{\m{\bar{P}}}

\newcommand{\condition}[1]{\m{\pred{Rel}_{#1}}}
\newcommand{\pos}[1]{\m{\pred{Succ}_{#1}}}

\keyword{nestingdepth}{nesting depth}
\keyword{Nestingdepth}{Nesting depth}

 \ouracronym{nestnf}{NestNF}{Nesting Normal Form} 
\ouracronym{deff}{DefF}{Definition Form}

\renewcommand{\foclog}{\logicname{FO(C)}}
\newcommand{\unnest}{\texttt{Unnest}}

\newcommand{\Dt}{\m{\D^t}}

\newcommand\trivialproof[1]{#1}
\newcommand\smallcitation[1]{#1}

\renewcommand{\I}{\m{I}}
\renewcommand{\J}{\m{J}}



\usepackage{aaai}
\usepackage{times}
\usepackage{helvet}
\usepackage{courier}
\setlength{\pdfpagewidth}{8.5in}
\setlength{\pdfpageheight}{11in}
\setcounter{secnumdepth}{2}
\title
{Inference in the 
\foclog Modelling Language}
\author{Bart Bogaerts \and Joost Vennekens \and Marc Denecker\\
Department of Computer Science, KU Leuven \\
\{bart.bogaerts, joost.vennekens, marc.denecker\}@cs.kuleuven.be\\
\AND
Jan Van den Bussche\\
Hasselt University \& transnational University of Limburg\\
 jan.vandenbussche@uhasselt.be}

\begin{document}
	\nocopyright
\maketitle
 \bibliographystyle{aaai}

\setboolean{nocomments}{true}

\begin{abstract}
Recently, \foclog, the integration of \clog with classical logic, was introduced as a knowledge representation language. 
Up to this point, no systems exist that perform inference on \foclog, and very little is known about properties of inference in \foclog. 
In this paper, we study both of the above problems. 
We define normal forms for \foclog, one of which corresponds to \foid. 
We define transformations between these normal forms, and show that, using these transformations, several inference tasks for \foclog can be reduced to inference tasks for \foid, for which solvers exist. 
We implemented this transformation and hence, created the first system that performs inference in \foclog. 
We also provide results about the complexity of reasoning in \foclog.

\end{abstract}

\section{Introduction}\label{sec:intro}
Knowledge Representation and Reasoning is a subfield of Artificial Intelligence  concerned with
 two tasks:
defining modelling languages that allow intuitive, clear, representation of knowledge and 
developing inference tools to reason with this knowledge.
Recently,  \clog was introduced with a strong focus on the first of these two goals \mycite{clog}. 
\clog  has an expressive recursive syntax suitable for expressing various forms of non-monotonic reasoning: disjunctive information in the context of closed world assumptions, 
non-deterministic inductive constructions, causal processes, and ramifications. 
\clog allows for example nested occurrences of causal rules. 

It is straightforward to integrate \fo with \clog, offering an expressive modelling language in which causal processes as well as assertional knowledge in the form of axioms and constraints can be naturally expressed. We call this integration \foclog.\footnote{Previously, this language was called \logicname{FO(C-Log)}} 
\foclog fits in the \fodot research project \cite{ASPOCP/Denecker12}, which aims at integrating expressive language constructs with a Tarskian model semantics in a unified language.

An example of a \clog expression is the following
\begin{ltheo}
	\begin{ldef}
		\Call{p}{\pred{Apply}(p) \land \pred{PassedTest}(p)}{ \pred{PermRes}(p).}\\
		\CLRule{(\Csel{p}{\pred{Participate}(p)}{\pred{PermRes}(p)})}{\pred{Lott}.}
	\end{ldef}
\end{ltheo}
This describes that all persons who pass a naturalisation test obtain permanent residence in the U.S., and that one person who participates in the green card lottery also obtains residence.
The person that is selected for the lottery can either be one of the persons that also passed the naturalisation test, or someone else. 
There are local closed world assumptions: in the example, the endogenous predicate \pred{PermRes} only holds for the people passing the test and at most one extra person. 
We could add an \fo constraint to this theory, for example $\forall p:\pred{Participate}(p)\limplies \pred{Apply}(p)$. 
This results in a \foclog theory; a structure is a model of this theory if it is a model of the \clog expression and no-one participates in the lottery without applying the normal way.

So far, very little is known about inference in \foclog. 
No systems exist to reason with \foclog, and complexity of inference in \foclog has not been studied. 
This paper studies both of the above problems.

The rest of this paper is structured as follows: in Section~\ref{sec:prelims}, we repeat some preliminaries, including a very brief overview of the semantics of \foclog.
In Section~\ref{sec:transfo} we define normal forms on \foclog and transformations between these normal forms. We also argue that one of these normal forms corresponds to \foid~\cite{tocl/DeneckerT08} and hence, that \idp~\cite{WarrenBook/DeCatBBD14} can be seen as the first \foclog-solver.
In Section~\ref{sec:ex} we give an example that illustrates both the semantics of \foclog and the transformations.
Afterwards, in Section~\ref{sec:complexity}, we define inference tasks for \foclog and study their complexity. 
We conclude in Section~\ref{sec:concl}.

\todo{ List the contributions concisely in the introduction.}

\section{Preliminaries}\label{sec:prelims}
We assume familiarity with basic concepts of \FO. 
Vocabularies, formulas, and terms are defined as usual. 
A \voc-structure \I interprets all symbols (including variable symbols) in \voc;  \domainof{\I} denotes the domain of \I and
$\sigma^\I$, with $\sigma$ a symbol in $\voc$, 
the interpretation of $\sigma $ in \I.
We use $\I[\subs{v}{\sigma}]$ for the structure $\J$ that equals \I, except on $\sigma$: $\sigma^\J=v$.
\emph{Domain atoms} are atoms of the form $P(\ddd)$ where the $d_i$ are domain elements.
We use restricted quantifications\smallcitation{, see e.g.  \cite{PreyerP02}}. In \FO, these are formulas of the form
$\univBQ{x}{\psi}: \varphi\label{formula:BinQuantForall}$ or $\exisBQ{x}{\psi}: \varphi\label{formula:BinQuantExists}$,
meaning that $\varphi$ holds for all (resp.~for some) $x$ such that $\psi$ holds.
The above expressions are syntactic sugar for
$\forall x: \psi \limplies \varphi$ and $ \exists x: \psi \land \varphi,$
but such a reduction is not possible for other restricted quantifiers in \clog.
We call $\psi$ the \emph{qualification} and $\varphi$ the \emph{assertion} of the restricted quantifications.
From now on, let \voc be a relational vocabulary, i.e., \voc consists only of predicate, constant and variable symbols.

Our logic has a standard, two-valued Tarskian semantics, which means that models represent possible states of affairs. 
Three-valued logic  with partial domains is used as a technical device to express intermediate stages of causal processes. 
A truth-value is one of the following: $\{\ltrue,\lfalse,\lunkn\}$, where $\lfalse^{-1}=\ltrue, \ltrue^{-1}=\lfalse$ and $\lunkn^{-1}=\lunkn$.
Two partial orders are defined on truth values: the precision order \leqp, given by $\lunkn\leqp\ltrue$ and $\lunkn\leqp\lfalse$ and the truth order $\lfalse\leq\lunkn\leq\ltrue$. 
	Let $D$ be a set, a \emph{partial set} \psetsym in $D$ is a function from $D$ to truth values.
We identify a partial set with a tuple $(\cct{\psetsym},\cpt{\psetsym})$ of two sets, where the \emph{certainly true set} 
$\cct{\psetsym}$ is $\{x\mid \psetsym(x)=\ltrue\}$ and the \emph{possibly true set} $\cpt{\psetsym}$ is $\{x\mid \psetsym(x)\neq\lfalse\}$. 
The union, intersection, and subset-relation of partial sets are  defined pointwise. \marc{leg meer uit in finale versie}
For a  truth value $v$, we define the restriction of a partial set $\psetsym$ to this truth-value, denoted $\resset{\psetsym}{v}$, as the partial set mapping every $x\in D$ to $\min_\leq(\psetsym(x),v)$. Every set $S$ is also a partial set, namely the tuple $(S,S)$.

A \emph{partial \voc-structure} \I consists of 
1) a \emph{domain} \domainof{\I}: a partial set of elements, and
2) a mapping associating a value to each symbol in \voc; for constants and variables, this value is in \ctinter{\domain}{\I}, 
for predicate symbols of arity $n$, this is a partial set $\inter{P}{\I}$ in $( \ptinter{\domain}{\I})^n$.
We often abuse notation and use the domain \domain as if it were a predicate. 
A partial structure \I is \emph{two-valued} if for all predicates $P$ (including \domain), $\ctinter{P}{\I}=\ptinter{P}{\I}$. 
There is a one-to-one correspondence between two-valued partial structures and structures.
If \I and \J are two partial structures with the same interpretation for constants, we call \I more precise than \J (\I\geqp \J) if for all its predicates $P$ (including \domain),
$\ctinter{P}{\I}\supseteq\ctinter{P}{\J}$ and $\ptinter{P}{\I}\subseteq\ptinter{P}{\J}$.

\bart{Shorter: removed: Intuitively, the value of a quantification $\forall x: \varphi$, is the value of the restricted quantification
$\forall x[\domain(x)]: \varphi$, where \domain refers to the (three-valued) domain. }

\begin{definition}\label{def:kleeneval}
We define the value of an \fo formula $\varphi$ in a partial structure $\I$ inductively based on the Kleene truth tables\smallcitation{ \cite{Kleene38}}. 
\begin{itemize}
	\item $P(\ttt)^\I = P^{\I}(\ttt^\I)$,
	\item $(\neg\varphi)^\I = ((\varphi)^\I)^{-1}$
	\item $(\varphi\land\psi)^\I = \min_\leq\left(\varphi^\I,\psi^\I\right)$
	\item $(\varphi\lor\psi)^\I = \max_\leq\left(\varphi^\I,\psi^\I\right)$
	\item $(\forall x:\varphi)^\I = \min_\leq\bigl\{\max(\domainof{\I}(d)^{-1},\varphi^{\I[\subs{d}{x}]})\mid d\in \ptinter{\domain}{\I}\bigr\}$
	\item $(\exists x:\varphi)^\I = \max_\leq\bigl\{\min(\domainof{\I}(d),\varphi^{\I[\subs{d}{x}]}) \mid d\in \ptinter{\domain}{\I}\bigr\}$
\end{itemize}
\end{definition}

In what follows we briefly repeat the syntax and formal semantics of \clog. For more details, an extensive overview of the informal semantics of \CEEs, and examples of \CEEs, we refer to \mycite{clog}.

\subsection{Syntax of \clog}

\begin{definition}
 \emph{Causal effect expressions} (\CEE) are defined inductively as follows:
\begin{itemize}
	\item if $P(\ttt)$ is an atom, then $\Catom{P(\ttt)}$ is a \CEE,
	\item if $\varphi$ is an \fo formula and $C'$ is a \CEE, then $\Cif{\varphi}{C'}$ is a \CEE,
	\item if $C_1$ and $C_2$ are \CEEs, then $\Cand{C_1}{C_2}$ is a \CEE,
	\item if $C_1$ and $C_2$ are \CEEs, then $\Cor{C_1}{C_2}$  is a \CEE,
	\item if $x$ is a variable, $\varphi$ is a first-order formula and $C'$ is a \CEE, then $\Call{x}{\varphi}{C'}$  is a \CEE,
	\item if $x$ is a variable, $\varphi$ is a first-order formula and $C'$ is a \CEE, then $\Csel{x}{\varphi}{C'}$ is a \CEE,
	\item if $x$ is a variable and $C'$ is a \CEE, then $\Cnew{x}{C'}$ is a \CEE.
\end{itemize}
\end{definition}

We call a \CEE an \emph{atom-}  (respectively \emph{\Cif-},   \emph{\Cand-}, \mbox{\emph{\Cor-},}   
\emph{\Call-}, \emph{\Csel-} or \emph{\Cnew-expression}) if it is of the corresponding form. 
We call a predicate symbol $P$ \xemph{\tcaused}in $C$ if $P$ occurs as the symbol of a (possibly nested) atom-expression in $C$.
All other symbols are called \xemph{\topen}in $C$.  
An occurrence of a variable $x$ is \emph{bound} in a \CEE if it occurs in the scope of a quantification over that variable 
($\forall x$, $\exists x$, $\Call\, x$, $\Csel\, x$, or $\Cnew\, x$) and \emph{free} otherwise. A variable is \emph{free} in a \CEE if it has free occurrences. 
A \emph{causal theory}, or \emph{\foidplus theory} is a \CEE without free variables. 
By abuse of notation, we often represent a causal theory as a finite set of \CEEs; the intended causal theory is the \Cand-conjunction of these \CEEs.
We often use \D for a causal theory and $C$, $C'$, $C_1$ and $C_2$ for its subexpressions. 
We stress that the connectives in \CEEs differ from their \fo counterparts. 
E.g., in the example in the introduction, the \CEE expresses that there is a cause for several persons to become American (those who pass the test and maybe one extra lucky person). 
This implicitly also says that every person without cause for becoming American is not American. As such \clog-expressions are highly non-monotonic. \marc{kan verwarrend zijn, leg meer uit in journal versie}

\subsection{Semantics of \clog}

\begin{definition}
Let \D be a causal theory;
we associate a  parse-tree with \D.
An \emph{occurrence} of a \CEE $C$ in \D is a node in the
parse tree of \D labelled with $C$.  The \emph{variable context} of an occurrence of a \CEE $C$ in  \D 
is the sequence of quantified variables as they occur on
the path from \D to $C$ in the parse-tree of \D. If $\xxx$ is the variable context of $C$ in $\D$, we denote $C$ as $\CVC{C}{\xxx}$ and the length of $\xxx$ as $n_C$. 
\end{definition}
For example, the variable context of $P(x)$ in $
        \Csel{y}{Q(y)}{\Call{x}{Q(x)}{P(x)}}$
is $[y,x]$. 
{\em Instances} of an occurrence $\CVC{C}{\xxx}$ correspond to assignments $\ddd$ of domain elements to $\xxx$. 

\begin{definition}\label{def:chfun}
Let \D be a causal theory and \domain a set. 
A \xemph{\chfun}\ch in \domain consists of 
\begin{itemize}
 \item for every occurrence $C$ of a \Csel-expression in \D, a total function $\chs{C}:\domain^{n_C}\to \domain$,
 \item for every occurrence $C$ of a \Cor-expression  in \D, a total function $\cho{C}:\domain^{n_C}\to \{1,2\}$,
\item for every occurrence $C$ of a \Cnew-expression in \D, an injective partial function $\chn{C}:\domain^{n_C}\to \domain$.
\end{itemize}
such that furthermore the images of all functions \chn{C} are disjoint (i.e., such that every domain element can be created only once).

The \emph{initial elements} of \ch are those that do not occur as image of one of the \chn{C}-functions: $\chin=D\setminus \cup_C \text{image}(\chn{C})$, where the union ranges over all occurrences of \Cnew-expressions.\end{definition}

The \cset of a \CEE in a partial structure is a partial set:
it contains information on everything that is caused and  everything that might be caused. 
For defining the semantics a new, unary predicate \indomain is used.

\begin{definition}

Let \D be a \CEE and \struct a partial structure. Suppose \ch is a \chfun in a set $D\supseteq\ptinter{\domain}{\struct}$.
Let $C$ be an occurrence of a \CEE in \D.
The \emph{\cset of $C$} with respect to \struct and \ch is a partial set of domain atoms, defined recursively: 
\begin{itemize}
	\item If $C$ is $P(\ttt)$, then $\valueofs{C} =\{P(\ttt^\struct)\}$,  
	\item if $C$ is  $\Cand{C_1}{C_2}$, then 
$\valueofs{C}=\valueofs{C_1}\cup\valueofs{C_2}$,
	\item if $C$ is $\Cif{\varphi}{C'}$, then 
$\valueofs{C} = \resset{\valueofs{C'}}{\varphi^\struct}$,
	\item if $C$ is $\Call{x}{\varphi}{C'}$, then 
		\begin{itemize}\item[]
$\valueofs{C} =$ $\bigcup\Bigl\{ r\bigl(\valu{\struct'}{\ch}(C'),\min_\leq(\domain^\struct(d),\varphi^{\struct'})\bigr)    
	\mid  d\in \ptinter{\domain}{\struct}$ and $ \struct'=\struct[\subs{d}{x}]\Bigr\}$     
\end{itemize}
	\item if $\CVC{C}{\yyy}$ is  $\Cor{C_1}{C_2}$, then
	\begin{itemize}
	 \item $\valueofs{C} =\valueofs{C_1}$ if $\cho{C}(\yyy^\struct)=1$,
	 \item and $\valueofs{C} =\valueofs{C_2}$ otherwise
	\end{itemize} 
	\item if $\CVC{C}{\yyy}$ is  $\Csel{x}{\varphi}{C'}$,  let $e=\chs{C}(\yyy^\struct)$,  $\struct'=\struct[\subs{e}{x}]$ and $v=\min_\leq(\domain^\struct(e),$ $\varphi^{\struct'})$. Then
$\valueofs{C} =\resset{\valu{\struct}{\ch}(C')}{v}$, 
	\item if $\CVC{C}{\yyy}$ is $\Cnew{x}{C'}$,  then
		\begin{itemize}
		\item $\valueofs{C} =\emptyset$  if $\chn{C}(\yyy^\struct)$ does not denote,
		\item and $\valueofs{C} =\{\indomain(\chn{C}(\yyy^\struct))\}\cup  \valu{\struct'}{\ch}(C')$, where  $\struct'=\struct[\subs{\chn{C}(\yyy^\struct)}{x}]$ otherwise,
		\end{itemize}
\end{itemize}
An instance of an occurrence of a \CEE in \D is \emph{relevant} if it is encountered in the evaluation of \valueof{\D}. 
We say that $C$ \emph{succeeds}\footnote{Previously, we did not say that $C$ ``succeeds'', but that the effect set ``is a possible effect set''. We believe this new terminology is more clear.}  with  $\ch$ in \struct if for all relevant occurrences \CVC{C}{\yyy} of  \Csel-expressions, $\chs{C}(\yyy^\struct)$ satisfies the qualification of $C$ and for all relevant instances \CVC{C}{\yyy} of \Cnew-expressions, $\chn{C}(\yyy^\struct)$ denotes. \marc{leg meer uit in journal versie}
\end{definition}

Given a structure \I (and a \chfun \ch), two lattices are defined:  $\struclat{\I,\zeta}$   denotes the set of all \voc-structures \struct with $\chin\subseteq\domainof{\struct}\subseteq\domainof{\I}$ 
such that for all \topen symbols $\sigma$ of arity $n$:
$\sigma^\struct=\sigma^\I\cap {(\domainof{\struct})}^n$.
This set is equipped with the truth order.
And \struclat{\I} denotes the sublattice of $\struclat{\I,\zeta}$ consisting of all structures in $\struclat{\I,\zeta}$ with domain equal to \domainof{\I}.
\marc{In journal versie: uitleggen dat dit wel degelijk een lattice is: oorsneede van structuren terug structuur}

A partial structure corresponds to an element of the bilattice  $(\struclat{\I,\zeta})^2$; the bilattice is equipped with the precision order.

\begin{definition}
Let  \I be  a structure and \ch a \chfun  in \domainof{\I}.
 The \emph{partial immediate causality operator} $\Apch$ is the operator on  $(\struclat{\I,\zeta})^2$ that  sends partial structure $\struct$  to a partial structure $\struct'$   such that
\begin{itemize}
	\item $\domain^{\struct'}(d) = \ltrue $ if $d\in\chin$ and $\domain^{\struct'}(d) =\Valueof{\struct}{\ch}{\D}(\indomain(d))$ otherwise
	\item for \tcaused symbols $P$, $\inter{P(\ddd)}{\struct'} = \Valueof{\struct}{\ch}{\D}(P(\ddd))$.
\end{itemize}
\end{definition}

Such operators have been studied intensively in the field of Approximation Fixpoint Theory \cite{DeneckerBV12}; and for such operators, the well-founded fixpoint has been defined in \cite{DeneckerBV12}. 
The semantics of \clog is defined in terms of this well-founded fixpoint in \mycite{clog}:

\begin{definition}
Let \D be a causal theory. 
We say that structure $\I$ is a \emph{model} of \D (notation $\I\modelsaft\D$) if there exists a \chfun \ch such that 	 
(\I,\I) is the well-founded fixpoint of $\Apch$, and $\D$ succeeds with \ch in \I.
\end{definition}

 \foclog is the integration of \fo and \clog. An \foclog theory consists of a set of causal theories and \fo sentences. A structure \I is a model of an \foclog theory if it is a model of all its causal theories and \fo sentences.
In this paper, we  assume, without loss of generality, that an \foclog theory \theory has exactly one causal theory. 

 \section{A Transformation to DefF}\label{sec:transfo}
In this section we present normal forms for \foclog and transformations between these normal forms. 
 The transformations we propose preserve equivalence modulo newly introduced predicates:
\begin{definition}
	Suppose $\voc\subseteq\voc'$ are vocabularies,
	$\theory$ is an \foclog theory over $\voc$ and $\theory'$ is an \foclog theory over $\voc'$.
	We call $\theory$ and $\theory'$ \emph{$\voc$-equivalent} if each model of $\theory$, can be extended to a model of $\theory'$ and the restriction of each model of $\theory'$ to $\voc$ is a model of $\theory$. 
\end{definition}

From now on, we use $\Call{\xxx}{\varphi}{C'}$, where $\xxx$ is a tuple of variables as syntactic sugar for \Call{x_1}{\ltrue}{\Call{x_2}{\ltrue}{\dots \Call{x_n}{\varphi}{C'}}}, and similar for $\Csel$-expressions. 
If $\xxx$ is a tuple of length $0$, $\Call{\xxx}{\varphi}{C'}$ is an abbreviation for $\Cif{\varphi}{C'}$. 
It follows directly from the definitions that \Cand and \Cor are associative, hence we use 
$\Cand{C_1}{\Cand{ C_2 }{ C_3}}$ as an abbreviation for $\Cand{(C_1}{\Cand{ C_2) }{ C_3}}$ and for $\Cand{C_1}{\Cand{( C_2 }{ C_3)}}$, and similar for $\Cor$-expressions.
\marc{maak hier maar wat meer woorden aan vuil in de journal versie: leg eerst uit dat all x phi C hetzelfde is als all x true C <- phi en dan pas naar die ariteit nul gaan etcetera}

%
%

\subsection{Normal Forms}

\begin{definition}
Let $C$ be an occurrence of a \CEE in $C'$. The \emph{nesting depth} of $C$ in $C'$ is the depth of $C$ in the parse-tree of $C'$.
In particular, the nesting depth of $C'$ in 
$C'$ is always $0$.
The \emph{height} of $C'$ is the maximal nesting depth of occurrences of \CEEs in $C'$. In particular, the height of atom-expressions is always $0$.
\end{definition}

\begin{example}
Let $\D$ be 
$\Cand{A}{(\Cor{(\Call{x}{P(x)}{Q(x)})}{B})}$.
The nesting depth of $B$ in $\D$ is $2$ and the height of \D is $3$.
\end{example}

\begin{definition}
A \clog theory is \emph{creation-free} if it does not contain any \Cnew-expressions, it is \emph{deterministic} if it is creation-free and it does not contain any \Csel or \Cor-expressions. An \foclog is \emph{creation-free} (resp. \emph{deterministic}) if its (unique) \clog theory is.
\end{definition}

\begin{definition}\label{def:normalforms}
A \clog theory is in \xemph{\nestnf}if it is of the form 
$\Cand{C_1}{\Cand{C_2}{\Cand{C_3}{\dots}}}$
where each of the $C_i$ is of the form
$\Call{\xxx}{\varphi_i}{C_i'}$
and each of the $C_i'$ has height at most one. 
A \clog theory \D is in \xemph{\deff}if it is in \nestnf and each of the $C_i'$ have height zero, i.e., they are atom-expressions.
An \foclog theory is \nestnf (respectively \deff) if its corresponding \clog theory is. 
\end{definition}


\begin{thm}\label{thm:main}
	Every \foclog theory over \voc is \voc-equivalent with an \foclog theory in \deff. 
\end{thm}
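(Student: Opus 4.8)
The plan is to realise the transformation of the given \foclog theory \D to \deff as a finite composition of local rewrite steps, each introducing only fresh symbols and each \voc-equivalent to its input over the vocabulary it started from. Since \voc-equivalence is transitive and (as I would check first) preserved under in-place replacement of a subexpression by a $\voc'$-equivalent one, it then suffices to justify the individual steps and argue termination. I would organise the work in two phases: first unnest an arbitrary theory into \nestnf, and then eliminate the remaining non-atomic bodies to land in \deff.

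The workhorse of the first phase is a \emph{naming} step. Given an occurrence of a sub-\CEE $C$ with variable context \xxx (of length $n_C$), I pick a fresh $n_C$-ary predicate $P$, replace that occurrence by the atom-expression $P(\xxx)$, and add the conjunct \Call{\xxx}{P(\xxx)}{C} to the theory. To see this is \voc-equivalent I would analyse the operator \Apch: on every relevant instance the newly introduced atom forces $P(\ddd)$ to be caused exactly when the original occurrence is reached and its effects are produced, while the added conjunct makes $P(\ddd)$ re-cause precisely the effect set of $C$; because $P$ is fresh, its only occurrences are these two, so the well-founded fixpoints of the old and new operators coincide on all symbols of \voc. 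Applying this to the depth-one, non-atomic children of each body lowers the maximal body-height by one, and an induction on height reaches \nestnf. Beforehand one rewrites \D, using associativity of $\mathbf{And}$ and the empty-context abbreviation \Cif{\ltrue}{C}, as a conjunction of expressions of the shape \Call{\xxx}{\varphi}{C'}, so that the target shape matches.

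In the second phase every conjunct already has the form \Call{\xxx}{\varphi}{C'} with $C'$ of height at most one, and it remains to turn each $C'$ into an atom. The deterministic cases are routine: a body \Cand{C_1}{C_2} splits into two conjuncts (the union of effect sets distributes over the $\mathbf{All}$-quantifier), and a nested $\mathbf{if}$ or $\mathbf{All}$ is absorbed into the outer quantifier by conjoining guards and extending the context, which is exactly the syntactic sugar fixed earlier. The real content is eliminating $\mathbf{Or}$, $\mathbf{Select}$ and $\mathbf{New}$, which carry genuine non-determinism and element creation. The idea is to transfer the information held in the choice functions \cho{C}, \chs{C}, \chn{C} into fresh \emph{exogenous} (open) predicates and to pin down their meaning with added \FO sentences, while recording with auxiliary predicates (the $\pred{Rel}$ and $\pred{Succ}$ predicates) which instances are relevant and whether the expression succeeds. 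Concretely, \Cor{C_1}{C_2} becomes two guarded rules branched on a fresh open predicate; \Csel{x}{\varphi}{C'} becomes an $\mathbf{All}$ ranging over a fresh open ``selection'' predicate, with \FO sentences enforcing on the relevant instances that a unique witness is chosen and satisfies $\varphi$ (this is precisely the ``succeeds'' requirement); and \Cnew{x}{C'} is treated analogously, with the created element represented through the creation predicate \indomain.

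The routine part is the bookkeeping of effect sets; the genuine difficulty is showing that each local step preserves the well-founded semantics of \Apch, i.e.\ that the well-founded fixpoint over the enlarged vocabulary restricts exactly to the one over the old vocabulary and that the fresh exogenous predicates parameterise precisely the family of admissible choice functions. I expect the $\mathbf{New}$-elimination to be the main obstacle, because it changes the domain: one must set up a correspondence between created elements and the fresh domain witnesses under which both the fixpoints of \Apch and the relevance/``succeeds'' conditions are preserved. Verifying that the added \FO sentences capture the relevance-restricted ``succeeds'' condition exactly---neither too strongly nor too weakly---is where most of the care will be needed.
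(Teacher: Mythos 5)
Your plan is correct and follows essentially the same route as the paper: a naming/unnesting pass into \nestnf (the paper's \unnest operation, whose well-founded-fixpoint preservation the paper justifies via the part-to-whole fixpoint-extension result it cites), then elimination of \Cor-, \Csel- and \Cnew-expressions by fresh exogenous choice and creation predicates with \FO axioms characterising \chfuns and the relevance-guarded ``succeeds'' condition, with domain relativisation through \initpred and \indomain to handle the variable-domain lattice, and finally splitting \Cand-bodies to reach \deff. The paper merely packages this as three lemmas (to \nestnf, determinisation via the lattice morphism \mch, then \deff), so your two-phase organisation is the same argument, and you correctly identify \Cnew-elimination and the exact capture of the ``succeeds'' condition as the points needing the most care.
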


We will prove this result in 3 parts: in Section \ref{ssec:clog2nestnf}, we show that every \foclog theory can be transformed to  \nestnf, in Section     \ref{ssec:nestnf2deterministic}, we show that every theory in \nestnf can be transformed into a deterministic theory and in Section \ref{ssec:deterministic2deff}, we show that every deterministic theory can be transformed to \deff. 
The \fo sentences in an \foclog theory do not matter for the normal forms, hence most results focus on the \clog part of \foclog theories.

\subsection{From Deterministic \foclog to DefF}\label{ssec:deterministic2deff}
\begin{lemma}\label{lem:pushCallCand}
	Let \D be a \clog theory. Suppose $C$ is an occurrence of an expression $\Call{\xxx}{\varphi}{\Cand{C_1}{C_2}}$. Let $\D'$ be the causal theory obtained from $\D$ by replacing $C$ with \Cand{(\Call{\xxx}{\varphi}{C_1})}{(\Call{\xxx}{\varphi}{C_2})}.
	Then \D and $\D'$ are equivalent.
\end{lemma}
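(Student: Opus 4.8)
The plan is to show that $C$ and its replacement induce the \emph{same} \cset in every partial structure (under matching \chfuns), and then to propagate this local equality through the surrounding context to conclude $\I\modelsaft\D$ iff $\I\modelsaft\D'$. First I would reduce to a single quantifier: by the syntactic sugar introduced above, $\Call{\xxx}{\varphi}{C''}$ is a stack of $n=|\xxx|$ single-variable \Call-layers (the innermost guarded by $\varphi$, the outer ones trivially), degenerating to a \Cif-expression when $n=0$. Hence it suffices to prove the one-layer identity
\[
\begin{aligned}
&\Valueof{\struct}{\ch}{\Call{x}{\psi}{(\Cand{C_1}{C_2})}} \\
&\quad = \Valueof{\struct}{\ch}{\Cand{(\Call{x}{\psi}{C_1})}{(\Call{x}{\psi}{C_2})}}
\end{aligned}
\]
for every \struct and \ch, together with its (simpler) \Cif-analog; the tuple case then follows by applying this identity at successively shallower occurrences, pushing the \Cand outward past one quantifier at a time and chaining the resulting equivalences.

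For the one-layer identity I would compute both \csets directly from the definition. Writing $v_d=\min_\leq(\domainof{\struct}(d),\psi^{\struct[\subs{d}{x}]})$, the left-hand side equals
\[
\bigcup_{d\in \ptinter{\domain}{\struct}} \resset{\Valueof{\struct[\subs{d}{x}]}{\ch}{\Cand{C_1}{C_2}}}{v_d}.
\]
Expanding $\Valueof{\struct'}{\ch}{\Cand{C_1}{C_2}}=\Valueof{\struct'}{\ch}{C_1}\cup\Valueof{\struct'}{\ch}{C_2}$ with $\struct'=\struct[\subs{d}{x}]$, the only algebraic fact needed is that restriction distributes over union, $\resset{S_1\cup S_2}{v}=\resset{S_1}{v}\cup \resset{S_2}{v}$; pointwise this reads
\[
\min_\leq(\max_\leq(S_1(x),S_2(x)),v)=\max_\leq(\min_\leq(S_1(x),v),\min_\leq(S_2(x),v)),
\]
which holds because the truth order $\lfalse\leq\lunkn\leq\ltrue$ is a chain, hence a distributive lattice. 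Distributing the restriction and regrouping the (commutative, idempotent) union splits the big union into $\bigcup_d \resset{\Valueof{\struct'}{\ch}{C_1}}{v_d}$ and its $C_2$-analog, which are precisely $\Valueof{\struct}{\ch}{\Call{x}{\psi}{C_1}}$ and $\Valueof{\struct}{\ch}{\Call{x}{\psi}{C_2}}$; their union is the right-hand side. The \Cif-case ($n=0$) is the same computation without the union over $d$.

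The bookkeeping that turns this local \cset-equality into model-equivalence is where I expect the (routine but delicate) work to lie, and is the main obstacle. Rearranging $C$ into its distributed form changes neither the set of \Csel-, \Cor-, and \Cnew-occurrences nor their variable contexts: an occurrence inside $C_1$ (or $C_2$) keeps context $\zzz\cdot\xxx\cdot\vvv$, where $\zzz$ is the context of $C$, $\xxx$ comes from the (now duplicated) \Call-quantifiers and $\vvv$ is the suffix internal to $C_i$. This yields a canonical bijection $\ch\mapsto\chp$ between the \chfuns of the two theories under which the domains $\domain^{n}$, the injectivity and disjointness conditions, and hence all function values coincide, so the same choices are used on both sides of the identity above. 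Because the \cset of a \CEE is defined compositionally, local equality in every structure propagates up the part of the parse tree above $C$ by induction on that context: each ancestor's \cset is built from its children's \csets (with \Cor/\Csel/\Cnew branching fixed by the matched \chfun and all \fo-formula values unchanged), so $\Valueof{\struct}{\ch}{\D}=\Valueof{\struct}{\chp}{\D'}$ for all \struct. As the evaluation trees then correspond occurrence-for-occurrence, the relevant occurrences match and ``\D succeeds with \ch'' transfers to ``\D' succeeds with \chp''. Equal \csets give identical operators, $\Apch=\Apchp$, hence the same well-founded fixpoint, so $\I\modelsaft\D$ iff $\I\modelsaft\D'$; the \fo sentences of the theory are untouched.
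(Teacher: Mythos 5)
Your proof is correct and takes essentially the same route as the paper's own (very terse) proof, which likewise observes that \D and $\D'$ admit the same \chfuns and that, given such a selection, the defined operators coincide, whence the well-founded fixpoints and the success condition transfer. You merely supply the details the paper leaves implicit, in particular the pointwise distributivity $\resset{\psetsym_1\cup \psetsym_2}{v}=\resset{\psetsym_1}{v}\cup\resset{\psetsym_2}{v}$ (valid because the truth order is a chain) underlying the equality of \csets, and the layer-by-layer reduction through the $\Call{\xxx}{\varphi}{C'}$ syntactic sugar.
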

\trivialproof{
\begin{proof}
	It is clear that \D and $\D'$ have the same selection functions. Furthermore, it follows directly from the definitions that given such a selection, the defined operators are equal.
\end{proof} }

Repeated applications of the above lemma yield:
\begin{lemma}\label{lem:deterministic2deff}
	Every deterministic \foclog theory is equivalent with an \foclog theory in \deff.
\end{lemma}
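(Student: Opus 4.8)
The plan is to operate entirely on the \clog part of the theory, since (as already noted after Theorem~\ref{thm:main}) the \fo sentences are irrelevant to the normal forms, and to reach \deff in two phases: first hoist every \Cand to the top of the parse tree, and then collapse each resulting \Cand-free conjunct into a single \Call over an atom. The vocabulary stays fixed, so ``equivalent'' means ``having exactly the same models'', which is what is required. Because \D is deterministic, the only constructors that occur are atom-expressions, \Cif, \Cand and \Call, and throughout I use that $\Cif{\varphi}{C'}$ is exactly the length-$0$ case of $\Call{\xxx}{\varphi}{C'}$.

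\textbf{Phase I (hoisting \Cand).} I would read Lemma~\ref{lem:pushCallCand} as a left-to-right rewrite rule, applied with $\xxx$ of length $1$ (a single \Call node) or length $0$ (a \Cif node), so that one \Call/\Cif node is pushed above a \Cand immediately below it; nested \Cand are flattened by associativity. Termination follows from the measure that sums, over all \Cand-occurrences, the number of \Call/\Cif nodes lying above them: each rewrite moves one \Cand-occurrence above one \Call/\Cif node while leaving the ancestor set of every other \Cand-occurrence unchanged (the duplicated \Call/\Cif node merely splits over the two branches), so the measure strictly decreases. When no rewrite applies, no \Call/\Cif node has any \Cand below it at all --- otherwise the highest \Cand lying below some \Call/\Cif node would have a \Call/\Cif parent and hence still be rewritable. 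Thus the top of the tree is a pure \Cand-tree, which I re-associate (by associativity of \Cand) into the right-nested form $\Cand{C_1}{\Cand{C_2}{\dots}}$; each conjunct $C_i$, being \Cand-free and deterministic, is a linear chain of \Call- and \Cif-operations ending in a single atom.

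\textbf{Phase II (collapsing a conjunct).} It remains to turn each such chain into a single $\Call{\xxx}{\varphi_i}{a_i}$. I would establish, directly from the definition of the \cset, the base merging equivalences
\[
\Cif{\varphi}{\Cif{\psi}{C}}\equiv\Cif{\varphi\land\psi}{C},\qquad
\Call{x}{\varphi}{\Cif{\psi}{C}}\equiv\Call{x}{\varphi\land\psi}{C},
\]
\[
\Cif{\varphi}{\Call{y}{\psi}{C}}\equiv\Call{y}{\varphi\land\psi}{C},\qquad
\Call{x}{\varphi}{\Call{y}{\psi}{C}}\equiv\Call{(x,y)}{\varphi\land\psi}{C},
\]
together with their evident tuple generalizations. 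Starting from the innermost atom (viewed as $\Cif{\ltrue}{a_i}$) and absorbing one outer operation at a time, these equivalences maintain the invariant that the accumulated expression has the form $\Call{\zzz}{\chi}{a_i}$; once all operations are absorbed the conjunct is $\Call{\xxx}{\varphi_i}{a_i}$, as \deff demands (and a bare atom is the degenerate case $\Cif{\ltrue}{a_i}\equiv a_i$). Composing Phases I and II yields an equivalent \deff theory.

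The routine but genuinely load-bearing step is the last equivalence, $\Call{x}{\varphi}{\Call{y}{\psi}{C}}\equiv\Call{(x,y)}{\varphi\land\psi}{C}$, because on the right the sugar attaches \ltrue to $x$ and all of $\varphi\land\psi$ to $y$. Proving it amounts to unfolding $\valueofs{\cdot}$ on both sides and checking that (i) the restriction operator $\resset{\cdot}{\cdot}$ composes by taking the pointwise $\min$ of truth values, (ii) the unions over domain elements for $x$ and for $y$ commute, and (iii) $\varphi$, which does not mention $y$, takes the same value under $\struct[\subs{d}{x}]$ and under $\struct[\subs{d}{x}][\subs{e}{y}]$, so moving it inward past the $y$-quantifier is harmless. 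The other point needing care is the termination bookkeeping of Phase~I, which must account for the duplication of \Call/\Cif nodes; everything else is a direct unfolding of the definitions.
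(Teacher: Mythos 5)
Your proof is correct, and your Phase I is, in fact, the entirety of the paper's own proof: the paper disposes of this lemma in a single sentence, ``Repeated applications of the above lemma yield\dots'', i.e., exactly your hoisting of \Cand-nodes via Lemma~\ref{lem:pushCallCand} (with the length-$0$ sugar covering the \Cif case, as you note); your termination measure and the maximality argument for the normal form are left entirely implicit there. Where you genuinely go beyond the paper is Phase II. The paper's one-liner tacitly assumes that once every \Cand sits at the top, each conjunct is already literally of the shape $\Call{\xxx}{\varphi}{a}$; you correctly observe that this fails under the sugar as the paper defines it (\ltrue on all outer variables, the whole qualification on the innermost), since \Cand-free chains with several nontrivial qualifications do occur --- notably the conjuncts of the form $\Call{\xxx}{\varphi}{\Call{y}{\creation_C(\xxx,y)}{\Cand{\indomain(y)}{C'}}}$ produced by the \nestnf-to-deterministic transformation of Lemma~\ref{lem:creation-free2deterministic}, which after \Cand-hoisting still carry two nontrivial qualifications. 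Your merging equivalences, and in particular the load-bearing $\Call{x}{\varphi}{\Call{y}{\psi}{C}}\equiv\Call{(x,y)}{\varphi\land\psi}{C}$ with its $\min$-composition of $\resset{\cdot}{\cdot}$, are exactly the missing glue, and your verification sketch (restriction composes by $\min$, unions commute, $\varphi$ is insensitive to the assignment of $y$) is sound; note also that in a deterministic theory replacing a subexpression by one with the same \cset in every partial structure is unproblematic, since there are no \Csel-, \Cor- or \Cnew-occurrences to which a \chfun or success condition could be attached. One hygiene point you should state explicitly: in the equivalences that move $\varphi$ inward past the quantifier over $y$, you must first $\alpha$-rename the bound $y$ whenever $y$ happens to occur free in $\varphi$; your side remark ``$\varphi$ does not mention $y$'' is an assumption, not a given, though renaming is harmless for exactly the compositionality reasons above. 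Net assessment: same route as the paper where the paper says anything at all, but your version is complete where the paper's proof glosses over a real (if routine) step.
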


\subsection{From NestNF to Deterministic \foclog}\label{ssec:nestnf2deterministic}
%
\begin{lemma}\label{lem:creation-free2deterministic}
		If \theory is an \foclog theory in  \nestnf over \voc, then \theory is \voc-equivalent with a deterministic \foclog theory.
\end{lemma}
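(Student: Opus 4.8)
The plan is to eliminate the three non-deterministic \clog-connectives \Cor, \Csel and \Cnew, treating each occurrence separately and replacing it by a \emph{deterministic} skeleton built from \Call and \Cif over a fresh \topen predicate that encodes the corresponding component of a \chfun, together with \fo sentences that pin these predicates down on exactly the relevant instances. The \fo sentences already present in \theory play no role in the normal form, so I would work with its unique \clog theory \D and argue by induction on the number of non-deterministic occurrences, each step preserving \nestnf and \voc-equivalence. The key structural fact I would exploit is that, since \D is in \nestnf, every such occurrence is the top connective of a body $C_i'$ of height at most one; it therefore sits directly under an enclosing $\Call{\xxx}{\varphi_i}{\cdot}$ and is applied only to atom-expressions, which is precisely what makes ``relevant'' and ``succeeds'' \fo-expressible.

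For a \Cor-occurrence I would replace $\Call{\xxx}{\varphi_i}{\Cor{A_1}{A_2}}$ by $\Call{\xxx}{\varphi_i}{\Cand{(\Cif{\pred{P_i}(\xxx)}{A_1})}{(\Cif{\neg\pred{P_i}(\xxx)}{A_2})}}$ with $\pred{P_i}$ a fresh \topen predicate. Since \topen predicates are two-valued in every element of $\struclat{\I,\zeta}$, the \cset of the replacement is $\resset{\valueofs{A_1}}{\pred{P_i}(\xxx)^\struct}\cup\resset{\valueofs{A_2}}{(\neg\pred{P_i}(\xxx))^\struct}$, which equals $\valueofs{A_1}$ when $\pred{P_i}(\xxx)$ holds and $\valueofs{A_2}$ otherwise --- exactly the two cases of the \Cor-semantics. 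Hence \Apch computes the same partial structure at every step, interpretations of $\pred{P_i}$ are in bijection with the values of $\cho{C}$, and, because the branches are atoms (so the choice gates nothing relevant to ``succeeds''), no \fo sentence is needed here.

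For a \Csel-occurrence I would replace $\Call{\xxx}{\varphi_i}{\Csel{y}{\psi}{A}}$ by $\Call{\xxx}{\varphi_i}{\Call{y}{\pred{S_i}(\xxx,y)}{A}}$ with $\pred{S_i}$ fresh and \topen, and add the sentence $\forall \xxx: \varphi_i \limplies \bigl[(\exists^{=1}y: \pred{S_i}(\xxx,y))\land \forall y:(\pred{S_i}(\xxx,y)\limplies\psi)\bigr]$. In \nestnf the qualification $\varphi_i$ is the only guard between the \Call and the \Csel, so an instance \ddd of the \Csel is relevant iff $\varphi_i$ holds at \ddd; for those \ddd the sentence forces $\pred{S_i}(\ddd,\cdot)$ to single out one $e$ with $\psi^{\struct[\subs{e}{y}]}=\ltrue$, and the inner \Call then contributes $A$ instantiated at $e$ with full weight, matching the \cset of $\Csel{y}{\psi}{A}$. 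For irrelevant \ddd the outer \Call already nullifies the contribution through the weight $\varphi_i^\struct=\lfalse$, so $\pred{S_i}(\ddd,\cdot)$ is left unconstrained, as \voc-equivalence requires; thus the ``succeeds'' demand that $\chs{C}$ meet its qualification on relevant instances becomes satisfaction of this \fo sentence, with $\pred{S_i}$ recording $\chs{C}$.

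The \Cnew-occurrences are the main obstacle, since they alter the domain: created elements live in the fixed pool $D\supseteq\ptinter{\domain}{\I}$ and are switched on through \indomain. I would take this pool as given, demote \indomain to an ordinary \tcaused predicate, and rewrite $\Call{\xxx}{\varphi_i}{\Cnew{y}{A}}$ as a selection over a fresh \topen $\pred{N_i}(\xxx,y)$ that causes both $A$ (at the chosen $y$) and $\indomain(y)$, guarded by \fo sentences enforcing, on the relevant instances, existence and uniqueness of the chosen $y$ together with a global injectivity/disjointness condition across all \Cnew-occurrences --- mirroring that the $\chn{C}$ are injective with disjoint images and ``denote'' wherever relevant. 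Composing the three kinds of replacement yields a deterministic \foclog theory $\theory'$, and \voc-equivalence would follow by turning any \chfun witnessing $\I\modelsaft\D$ into interpretations of $\pred{P_i},\pred{S_i},\pred{N_i}$ (and, conversely, reading a \chfun off any model of $\theory'$) and checking by structural induction on the parse tree that the operators \Apch agree step by step, so that their well-founded fixpoints, and hence the models, coincide modulo the fresh symbols while ``succeeds'' matches the added sentences. The genuinely delicate points will be (i) certifying that relevance really is \fo-expressible by leaning on the height-at-most-one shape, and (ii) for \Cnew, organising the pool $D$ so that the creation-free theory reproduces precisely the created-element structure of the originals.
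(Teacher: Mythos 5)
Your overall strategy---trading the \chfun for fresh \topen predicates, one per non-deterministic occurrence, with \fo sentences enforcing functionality, injectivity and success on the relevant instances---is exactly the paper's, but your \Csel step breaks equivalence. You replace $\Call{\xxx}{\varphi_i}{\Csel{y}{\psi}{A}}$ by $\Call{\xxx}{\varphi_i}{\Call{y}{\pred{S}_i(\xxx,y)}{A}}$, moving the qualification $\psi$ entirely into an \fo sentence checked only in the final model. In the original semantics, however, $\psi$ restricts the \cset at \emph{every stage} of the well-founded computation (via $v=\min_\leq(\domain^\struct(e),\psi^{\struct'})$), and $\psi$ may mention \tcaused predicates; dropping it from the guard changes the operator and admits self-supporting selections. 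Concretely, take $\D=\Cand{(\Csel{y}{P(y)}{Q(y)})}{(\Call{x}{Q(x)}{P(x)})}$ with $P,Q$ \tcaused and domain $\{e\}$: for the unique \chfun the well-founded fixpoint makes $P(e)$ and $Q(e)$ false, so the selected $e$ violates the qualification and \D has no model. Your translation $\Call{y}{\pred{S}(y)}{Q(y)}$ with $\pred{S}(e)=\ltrue$ derives $Q(e)$ with full weight, hence $P(e)$, and your sentence $\forall y:\pred{S}(y)\limplies P(y)$ is then satisfied---a spurious model, so \voc-equivalence fails. The paper avoids this by keeping the qualification inside the guard, $\Call{y}{\psi\land\selected_C(\xxx,y)}{A}$, so the operators agree stage by stage, and only the success condition $\forall\xxx:\condition{C}\limplies\exists y:(\selected_C(\xxx,y)\land\psi)$ is shifted to \fo.

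Your \Cnew treatment is also missing the mechanism that lets a fixed-domain operator simulate the growing-domain one, which the paper flags as the main obstacle. The deterministic target's operator lives on a lattice of structures whose domain is the whole pool from the start, so every quantification in \D---$\forall$, $\exists$, \Call and \Csel, including those inside the relevance conditions $\varphi_i$ and the qualifications---would range over all pool elements at every stage, including not-yet-created and never-created ones, whereas in \D they range over the evolving domain. The paper's first transformation step relativises every quantification $\alpha x[\psi]:\chi$ to $\alpha x[(\indomain(x)\lor\initpred(x))\land\psi]:\chi$; this requires an \initpred predicate for the initial elements $\chin$ (which you never introduce, axiomatised in \seltheo as the complement of the images of the $\creation_C$), plus the closing sentence $\forall x:\initpred(x)\lequiv\lnot\indomain(x)$ forcing the pool to coincide with the final domain. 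Without these, models of your translation can contain junk elements and evaluate quantified conditions over the wrong domain, breaking \voc-equivalence in both directions; your clause (ii) gestures at this but supplies no construction. A related soft spot: you eliminate occurrences one at a time by induction, but intermediate theories then mix genuine \Cnew-creation with pool-style simulation (and the disjointness of the images of all $\chn{C}$ is a global condition), which is why the paper instead transforms all occurrences at once and proves operator agreement through the lattice morphism $\mch$. Your \Cor step and the shape of your \Cnew rewriting ($\Call{y}{\creation_C(\xxx,y)}{\Cand{\indomain(y)}{A}}$) do match the paper.
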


We will prove Lemma \ref{lem:creation-free2deterministic} using a strategy that replaces a \chfun by an interpretation of new predicates (one per occurrence of a non-deterministic \CEE).
The most important obstacle for this transformation are  \Cnew-expressions. 
In deterministic \clog, no constructs influence the domain.
This has as a consequence that the immediate causality operator for a deterministic \clog theory is defined in a lattice of structures with fixed domain, while in general, the operator is defined in a lattice with variable domains. 
In order to bridge this gap, we use two predicates to describe the domain, \initpred are the initial elements and \indomain are the created, the  union of the two is the domain.
Suppose a \clog theory \D over vocabulary \voc is given. 

\begin{definition}
 We define the \emph{\D-selection vocabulary} \sel{\voc} as the vocabulary consisting of:
  \begin{itemize}
   \item a unary predicate \initpred,
   \item for every occurrence $C$ of a \Cor-expression in \D, a new $n_C$-ary predicate $\choosefirst_C$,
   \item for every occurrence $C$ of a \Csel-expression in \D, a new $(n_C+1)$-ary predicate $\selected_C$,
   \item for every occurrence $C$ of a \Cnew-expression in \D, a new $(n_C+1)$-ary predicate $\creation_C$,
  \end{itemize}
\end{definition}

Intuitively, a \sel{\voc}-structure corresponds to a \chfun: \initpred correspond to \chin, $\choosefirst_C$ to $\cho{C}$, $\selected_C$ to \chs{C} and $\creation_C$ to \chn{C}.
  
   \begin{lemma}\label{lem:chfun-seltheo}
There exists an \fo theory  \seltheo over  \sel{\voc} such that there is a one-to-one correspondence between \chfuns in \domain and models of \seltheo with domain \domain. 
  \end{lemma}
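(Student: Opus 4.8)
The plan is to write $\seltheo$ as the conjunction of first-order axioms that express exactly the defining conditions of a \chfun, reading each new predicate of $\sel{\voc}$ as the graph (or characteristic set) of the corresponding choice function. Concretely, I would map a \chfun $\ch$ in $\domain$ to the $\sel{\voc}$-structure $M_\ch$ with domain $\domain$ that interprets $\initpred$ as $\chin$, each $\choosefirst_C$ as $\{\ddd \mid \cho{C}(\ddd) = 1\}$, and each $\selected_C$ and $\creation_C$ as the graphs of $\chs{C}$ and $\chn{C}$ respectively. The axioms of $\seltheo$ are then precisely the conditions a $\sel{\voc}$-structure must satisfy to arise in this way.

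For each occurrence $C$ of a \Csel-expression, since $\chs{C}$ is a \emph{total} function, I add totality $\forall \xxx: \exists y: \selected_C(\xxx, y)$ together with single-valuedness $\forall \xxx \forall y \forall y': \selected_C(\xxx, y) \land \selected_C(\xxx, y') \limplies y = y'$, where $\xxx$ has length $n_C$. No axiom is needed for \Cor-occurrences: every interpretation of the relation $\choosefirst_C$ already corresponds to a total function to $\{1,2\}$ by reading membership as the value $1$ and non-membership as $2$. For each occurrence $C$ of a \Cnew-expression, since $\chn{C}$ is an injective \emph{partial} function, I add single-valuedness $\forall \xxx \forall y \forall y': \creation_C(\xxx, y) \land \creation_C(\xxx, y') \limplies y = y'$ and injectivity $\forall \xxx \forall \xxx' \forall y: \creation_C(\xxx, y) \land \creation_C(\xxx', y) \limplies \xxx = \xxx'$ (no totality, since the function is partial). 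To encode that the images of all the $\chn{C}$ are disjoint, for every pair of distinct \Cnew-occurrences $C, C'$ I add $\forall \xxx \forall \xxx' \forall y: \neg(\creation_C(\xxx, y) \land \creation_{C'}(\xxx', y))$. Finally, since $\chin$ is determined as the complement of the union of these images, I add the defining axiom $\forall y: \initpred(y) \lequiv \bigwedge_C \forall \xxx: \neg \creation_C(\xxx, y)$, the conjunction ranging over all \Cnew-occurrences (with $\xxx$ of the appropriate length $n_C$ in each conjunct).

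It then remains to verify that $\ch \mapsto M_\ch$ is a bijection between \chfuns in $\domain$ and models of $\seltheo$ with domain $\domain$. In one direction, $M_\ch \models \seltheo$ follows immediately from the defining properties of a \chfun. In the other, given a model $M$, the totality and single-valuedness axioms make each $\selected_C^M$ the graph of a total function $\domain^{n_C} \to \domain$, single-valuedness makes each $\creation_C^M$ the graph of a partial function, and injectivity together with disjointness makes these partial functions injective with disjoint images; reading $\cho{C}$ off from $\choosefirst_C^M$ then yields a \chfun $\ch_M$, and the $\initpred$-axiom forces $\initpred^M$ to equal the set of initial elements of $\ch_M$, so that $M = M_{\ch_M}$ is fully recovered. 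The two constructions invert each other by inspection, giving the one-to-one correspondence. I expect the only real obstacle to be bookkeeping rather than mathematics: stating the disjointness and $\initpred$-defining axioms uniformly over occurrences whose variable contexts have different lengths $n_C$ makes the formulas notationally heavy, even though each individual check is routine.
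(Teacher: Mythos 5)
Your proposal is correct and takes essentially the same route as the paper, whose proof sketches exactly these axioms: totality and functionality of each $\selected_C$, partial functionality and injectivity of each $\creation_C$, disjointness of the images of distinct \Cnew-occurrences, and the defining axiom $\forall y: \initpred(y) \lequiv \lnot \bigvee_C (\exists \xxx : \creation_{C}(\xxx,y))$, which is logically equivalent to your conjunction-of-negations form. Your write-up merely fills in details the paper leaves implicit, namely the explicit verification that $\ch \mapsto M_\ch$ is a bijection and the observation that $\choosefirst_C$ needs no axiom because arbitrary subsets of $\domain^{n_C}$ are in bijection with total functions to $\{1,2\}$.
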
 
  \begin{proof}
This theory contains sentences that express that $\selected_C$ is functional, and that $\creation_C$ is a partial function. It is straightforward to do this in \fo (with among others, constraints such as $\forall \xxx : \exists y: \selected_C(\xxx,y)$).
Furthermore, it is also easy to express that the $\creation_C$ functions are injective, and that different \Cnew-expressions create different elements. 
Finally, this theory relates $\initpred$ to the $\creation_C$ expressions: $\forall y: \initpred(y) \lequiv \lnot \bigvee_C (\exists \xxx : \creation_{C}(\xxx,y))$ where the disjunction ranges over all occurrences $C$ of \Cnew-expressions.  
  \end{proof}

The condition that a causal theory succeeds can also be expressed as an FO theory.
For that, we need one more definition.
\begin{definition}
	Let \D be a causal theory in \nestnf and let $C$ be one of the $C_i'$ in definition \ref{def:normalforms}, then we call $\varphi_i$ (again, from definition \ref{def:normalforms}) the \emph{relevance condition} of $C$ and denote it \condition{C}.
\end{definition}
%

In what follows, we define one more extended vocabulary.
First, we use it to express the constraints that \D succeeds and afterwards, for the actual transformation.

\begin{definition}
	 The \emph{\D-transformed vocabulary} \trans{\voc} is the disjoint union of \voc and \sel{\voc} extended with the unary predicate symbol \indomain.
\end{definition}

%

\begin{lemma}\label{lem:posseffectsettheory}
	Suppose \D is a causal theory in \nestnf, and \ch is a \chfun with corresponding \sel{\voc}-structure $M$. There exists an \fo theory \pos{\D} such that for every (two-valued) structure $\I$ with $\I|_{\sel{\voc}}=M$, \D succeeds with respect to \I and \ch iff  $\I\models \pos{\D}$.
\end{lemma}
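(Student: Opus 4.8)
The plan is to read off $\pos{\D}$ directly from the syntactic shape of a \nestnf theory and verify it against the definition of ``succeeds''. Write $\D=\Cand{C_1}{\Cand{C_2}{\dots}}$ with each $C_i=\Call{\xxx_i}{\varphi_i}{C_i'}$ and each $C_i'$ of height at most one. Since the proper subexpressions of a height-one body are atom-expressions, every occurrence of a \Csel-, \Cor-, or \Cnew-expression in \D is one of the bodies $C_i'$ itself, whose variable context is exactly $\xxx_i$; thus $\selected_{C_i'}$ and $\creation_{C_i'}$ have arity $n_i+1$, where $n_i=|\xxx_i|$. I would then let $\pos{\D}$ contain, for each body $C_i'$ of the form $\Csel{x}{\psi}{C'}$, the sentence
\[
  \forall \xxx_i\, \forall x:\ \bigl(\condition{C_i'}\land \selected_{C_i'}(\xxx_i,x)\bigr)\limplies \psi,
\]
and, for each body $C_i'$ of the form $\Cnew{x}{C'}$, the sentence
\[
  \forall \xxx_i:\ \condition{C_i'}\limplies \exists x:\ \creation_{C_i'}(\xxx_i,x),
\]
where $\condition{C_i'}=\varphi_i$ is the relevance condition and $\psi$ is the qualification of the \Csel-expression. (\Cor-bodies contribute no sentence, since ``succeeds'' constrains only \Csel- and \Cnew-expressions.) Here the quantified $x$ plays the role of the selected/created element, and reusing the name $x$ lets $\psi$ refer to it directly.

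The heart of the argument is a characterization of relevance for \nestnf theories, which I expect to be the main obstacle, since ``relevant'' is defined through the evaluation of $\valueof{\D}$ and not syntactically. I would establish: for two-valued \I and any body $C_i'$, the instance of $C_i'$ given by a tuple $\ddd$ is relevant iff $\I[\subs{\ddd}{\xxx_i}]\models\condition{C_i'}$. This follows by unfolding $\valueof{\D}$: the top \Cand forces every $C_i$ to be evaluated, and evaluating $C_i=\Call{\xxx_i}{\varphi_i}{C_i'}$ restricts the contribution of the instance $C_i'[\ddd]$ by the $\min_\leq$ of the domain-membership of the components of $\ddd$ and of $\varphi_i^{\I[\subs{\ddd}{\xxx_i}]}$. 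As \I is two-valued, every domain-membership value is \ltrue, so $C_i'[\ddd]$ is encountered (its effect set not wiped out) exactly when $\varphi_i^{\I[\subs{\ddd}{\xxx_i}]}=\ltrue$, i.e.\ when $\condition{C_i'}$ holds at $\ddd$. Crucially, in a \nestnf theory the children of a height-one body are atom-expressions, so no \Cor- or \Csel-expression and no qualification other than $\varphi_i$ lies above $C_i'$ in the parse tree; relevance is therefore gated by $\condition{C_i'}$ alone, which is precisely what lets a single \fo formula capture it.

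Finally I would combine this with the correspondence of Lemma~\ref{lem:chfun-seltheo}: since $M=\restrict{\I}{\sel{\voc}}$ encodes \ch, $\I\models\selected_{C_i'}(\ddd,e)$ iff $\chs{C_i'}(\ddd)=e$, and $\I\models\exists x\, \creation_{C_i'}(\ddd,x)$ iff $\chn{C_i'}(\ddd)$ denotes. For the forward direction, assume \D succeeds with \I and \ch and consider the \Csel-sentence for a body $C_i'=\Csel{x}{\psi}{C'}$: given $\ddd,e$ with $\I[\subs{\ddd}{\xxx_i}]\models\condition{C_i'}$ and $\I\models\selected_{C_i'}(\ddd,e)$, the relevance characterization makes $C_i'[\ddd]$ relevant and $e=\chs{C_i'}(\ddd)$, so ``succeeds'' gives that $e$ satisfies $\psi$ and the sentence holds; the \Cnew-sentences are handled identically, whence $\I\models\pos{\D}$. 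For the converse, assume $\I\models\pos{\D}$ and let $C_i'[\ddd]$ be any relevant \Csel-instance; relevance gives $\I[\subs{\ddd}{\xxx_i}]\models\condition{C_i'}$, and instantiating the matching sentence at $e=\chs{C_i'}(\ddd)$ (which satisfies $\selected_{C_i'}(\ddd,e)$) shows $\chs{C_i'}(\ddd)$ satisfies $\psi$, while dually the \Cnew-sentences force $\chn{C_i'}(\ddd)$ to denote on every relevant \Cnew-instance. Hence \D succeeds with \I and \ch, completing both directions.
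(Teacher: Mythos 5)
Your proposal is correct and takes essentially the same route as the paper: the paper's $\pos{\D}$ consists of exactly your two families of sentences (its \Csel-sentence is the existential form $\forall \xxx: \condition{C}\limplies \exists y: (\selected_{C}(\xxx,y)\land \psi)$, equivalent to your universal form because $\selected_{C}$ is interpreted as a total function in $M$), and it rests on the same key observation that in \nestnf an instance of a body $C_i'$ is relevant iff its relevance condition $\condition{C_i'}$ holds. The only difference is one of detail: the paper merely asserts this relevance characterization in one line, whereas you actually justify it by unfolding the evaluation of $\valueof{\D}$ through the top-level \Cand and the \Call-qualifications in a two-valued structure, which is a welcome filling-in rather than a deviation.
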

\begin{proof}
\D is in \nestnf; for every of the $C'_i$ (as in Definition \ref{def:normalforms}), $\condition{C'_i}$ is true in \I if and only if $C'_i$ is relevant.  \marc{let op met vrije variabelen}\bart{geen probleem want \I interpreteert ze, leg uit in journal versie}
Hence, for \pos{\D} we can take the \FO theory consisting of the following sentences:
\begin{itemize}
	\item $\forall \xxx: \condition{C}\limplies \exists y: \creation_{C}(\xxx,y)$, for all \Cnew-expressions \CVC{C}{\xxx} in \D,
	\item $\forall \xxx: \condition{C}\limplies \exists y: (\selected_{C}(\xxx,y)\land \psi)$, for all \Csel-expressions \CVC{C}{\xxx} of the form \Csel{y}{\psi}{C'} in $\D$.\qedhere
\end{itemize}
\end{proof}

Now we describe the actual transformation: we translate every quantification into a relativised version, make explicit that a \Cnew-expression causes an atom $\indomain(d)$, and eliminate all non-determinism using the predicates in \sel{\voc}.

\begin{definition}
 Let $\Delta$ be a \clog theory over \voc in \nestnf. 
 The \emph{transformed theory} \Dt is the theory obtained from \D by applying the following transformation:
\begin{itemize}
\item first replacing all quantifications  $\alpha x [\psi]:\chi$, where $\alpha \in \{\forall, \exists, \Csel, \Call\}$  by $\alpha x[(\indomain(x)\vee\initpred(x))\land\psi]:\chi$ 
\item subsequently replacing each occurrence \CVC{C}{\xxx} of an expression \Cnew{y}{C'} by $\Call{y}{\creation_C(\xxx,y)}{\Cand{\indomain(y)}{C'}}$, 
\item replacing every occurrence $\CVC{C}{\xxx}$ of an expression $\Cor{C_1}{ C_2}$ by 
$(\Cif{\choosefirst_C(\xxx)}{C_1})\Cand (\Cif{\lnot \choosefirst_C(\xxx)}{C_2}),$
	\item and replacing every occurrence $\CVC{C}{\xxx}$ of an expression $\Csel{y}{\varphi}{C'}$ by 
	$\Call{y}{\varphi\land \selection_C(\xxx,y)}{C'}.$
\end{itemize}
\end{definition}

\newcommand{\mch}{\m{m_{\ch}}}

Given a structure \I and a \chfun \ch, there is an obvious lattice morphism  $\mch:\struclat{\I,\ch} \to \struclatp{\I}$ mapping a structure $J$ to the structure $J'$ with domain $\domainof{J'}=\domainof{\I}$ interpreting all symbols in \sel{\voc} according to \ch (as in Lemma \ref{lem:chfun-seltheo}), all symbols in \voc (except for the domain) the same as \I and interpreting \indomain as $\domainof{J}\setminus\inter{\initpred}{J'}$.
\mch can straightforwardly be extended to a bilattice morphism.
\marc{in journal versie: let uit wat morphism, isomorphism,... is en geef eigenschappen}

%
%
%

\begin{lemma}\label{lem:new:operatorsagree}
	Let \ch be a \chfun for \D and \Apch and \Ap be the partial immediate causality operators of \D and \Dt respectively. 
	Let \struct be any partial structure in $(\struclat{\I,\ch})^2$. Then $\mch(\Apch(\struct))=\Ap(\mch(\struct))$.
\end{lemma}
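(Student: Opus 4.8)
The plan is to reduce the operator identity to a single structural correspondence between the \cset of \D, evaluated with the \chfun \ch, and the \cset of its deterministic transform \Dt, evaluated in the image structure $\mch(\struct)$. Since \Dt contains no \Cnew-, \Csel- or \Cor-expressions, it induces a fixed-domain operator that needs no \chfun, so \Ap lives over the fixed-domain lattice whereas \Apch lives over the variable-domain lattice \struclat{\I,\ch}, with \mch as the bridge. Once the two \csets are known to agree through \mch, the operator identity $\mch(\Apch(\struct))=\Ap(\mch(\struct))$ follows by reading off both structures symbol by symbol.

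First I would establish an auxiliary claim about ordinary \fo formulas: for every formula $\psi$ occurring as a qualification or inside a \Cif-expression of \D, and its relativised transform $\psi^t$, we have $\psi^\struct=(\psi^t)^{\mch(\struct)}$ as Kleene truth values. This goes by induction on $\psi$, the only nontrivial case being a quantifier $\forall x[\chi]:\rho$ (and dually for $\exists$), whose guard becomes $(\indomain(x)\vee\initpred(x))\wedge\chi$. The crucial observation is that, by the definition of \mch and because the domain of every component of \struct contains \chin, one has $(\indomain(d)\vee\initpred(d))^{\mch(\struct)}=\domainof{\struct}(d)$ for every $d$. Hence the relativised quantifier over the fixed domain \domainof{\I} reproduces exactly the domain-weighted $\min$/$\max$ of Definition \ref{def:kleeneval} ranging over \ptinter{\domain}{\struct}, the spurious elements $d\notin\ptinter{\domain}{\struct}$ contributing the neutral value.

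The heart of the argument is the main claim: for every occurrence $C$ of a \CEE in \D with transform $C^t$ in \Dt, the \cset $\Valueof{\struct}{\ch}{C}$ coincides with the \cset of $C^t$ computed in $\mch(\struct)$ (using that \mch commutes with substituting a domain element for a variable). I would prove this by structural induction on $C$. Atom- and \Cand-expressions are immediate; \Cif-expressions reduce to the auxiliary claim through the definition of $\resset{\cdot}{\cdot}$; and for \Call-expressions I combine the auxiliary claim with the identity $(\indomain(d)\vee\initpred(d))^{\mch(\struct)}=\domainof{\struct}(d)$ above, so that the fixed-domain union of \Dt collapses onto the variable-domain union of \D. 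The non-deterministic cases rest on Lemma \ref{lem:chfun-seltheo}: since $\choosefirst_C$, $\selected_C$ and $\creation_C$ are two-valued in $\mch(\struct)$ and encode $\cho{C}$, $\chs{C}$ and $\chn{C}$, the guards $\choosefirst_C(\xxx)$ and $\lnot\choosefirst_C(\xxx)$ keep exactly one branch of the translated \Cor-expression; the guard $\selected_C(\xxx,x)$ restricts the translated \Call-expression to the single selected element $\chs{C}(\yyy^\struct)$; and $\creation_C(\xxx,x)$ isolates the created element $\chn{C}(\yyy^\struct)$, contributing the atom $\indomain(\chn{C}(\yyy^\struct))$ when it denotes and nothing otherwise. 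In each case the resulting restriction or selection reproduces the matching clause of the \cset definition of \D, while the inductive hypothesis disposes of the nested \CEE.

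Finally I would assemble these claims into the operator identity by comparing the two structures component-wise. The \topen symbols and the \sel{\voc} symbols are pinned down by \I and \ch on both sides, hence agree by the definition of \mch; for a \tcaused symbol $P$ of \D, both sides equal $\Valueof{\struct}{\ch}{\D}(P(\ddd))$, which the main claim identifies with the \cset value of $P(\ddd)$ for \Dt in $\mch(\struct)$. The one component that requires care, and the step I expect to be the main obstacle, is \indomain: on the \D-side it is not a predicate but is read off the domain of $\Apch(\struct)$, which \mch re-encodes as that domain minus \initpred, whereas on the \Dt-side \indomain is an ordinary \tcaused predicate. Reconciling \D's variable-domain bookkeeping with \Dt's explicit \indomain-predicate is handled by two observations: \indomain-atoms arise in the \cset of \D only from \Cnew-expressions, hence never for initial elements, so both sides give \lfalse for $\indomain(d)$ whenever $d\in\chin$; and for $d\notin\chin$ the definition of \Apch makes the domain value of $d$ in $\Apch(\struct)$ equal to $\Valueof{\struct}{\ch}{\D}(\indomain(d))$, which the main claim again matches with the \Dt-value. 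This is exactly the value recorded by \mch's clause defining \indomain as the domain minus \initpred, so the two images coincide and the lemma follows.
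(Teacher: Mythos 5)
Your proposal is correct and follows essentially the same route as the paper, whose proof is only a three-point sketch: that the \Cnew-replacement causes exactly the atoms $\indomain(y)$ and the effects of $C'$ for the created elements, that the relativisation $(\indomain(x)\vee\initpred(x))$ makes the fixed-domain quantifications reproduce the evaluation over the domain of \struct, and that conditionalising the non-deterministic constructs by the \sel{\voc}-encoding of \ch leaves the \cset unchanged. Your write-up fleshes out this sketch rigorously (the structural induction on \CEEs, the Kleene-valuation claim $(\indomain(d)\vee\initpred(d))^{\mch(\struct)}=\domainof{\struct}(d)$, and the \indomain-versus-domain bookkeeping for $d\in\chin$), but it is the same argument, correctly executed.
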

\begin{proof}[Idea of the proof]
\Cnew-expressions $\Cnew{y}{C'}$ in \D have been replaced by \Call expressions causing two subexpressions: $\indomain(y)$ and the $C'$ for exactly the $y$'s that are created according to \ch. 
Furthermore, the relativisation of all other quantifications guarantees that we correctly evaluate all quantifications with respect to the domain of \struct, encoded in  $\initpred\cup\indomain$.

Furthermore, all non-deterministic expressions have been changed into \Call-expressions that are conditionalised by the \chfun; this does not change the  effect set; thus, the operators correspond.
\end{proof}

\begin{lemma}\label{lem:WFModelAgree}
	Let \ch, \Apch and \Ap be as in lemma \ref{lem:new:operatorsagree}. If \I is the well-founded model of \Apch, $\mch({ \I })$ is the well-founded model of \Ap.
\end{lemma}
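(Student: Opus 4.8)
The plan is to leverage Lemma~\ref{lem:new:operatorsagree}, which already gives that $\mch$ intertwines the two operators ($\mch\circ\Apch = \Ap\circ\mch$), together with the fact that the well-founded fixpoint is built from its operator by a transfinite construction that uses only the underlying (bi)lattice structure, which $\mch$ preserves.

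First I would recall from Approximation Fixpoint Theory \cite{DeneckerBV12} the characterisation of the well-founded fixpoint as the $\leqp$-limit of the \emph{well-founded induction sequence}: a $\leqp$-increasing transfinite sequence $\sequence{\struct_i}{i}$ starting at the $\leqp$-least element $\bot$, where each successor step refines the current pair through the lower/upper operators derived from the approximator (concretely, by taking least fixpoints of these derived operators) and limit steps take $\leqp$-least upper bounds; the well-founded fixpoint is the limit and is independent of the refinement choices. Next I would record the relevant structural properties of $\mch$: it is a bilattice isomorphism from $(\struclat{\I,\ch})^2$ onto the sublattice of $\struclatp{\I}^2$ consisting of the structures that interpret $\sel{\voc}$ according to \ch, so it is a bijection onto its image, preserves and reflects $\leqp$, maps $\bot$ to $\bot$, and preserves $\leqp$-least upper bounds of chains. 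Since the $\sel{\voc}$-symbols are open in \Dt and are therefore held fixed by \Ap, the well-founded induction sequence of \Ap that starts at $\mch(\bot)$ remains inside the image of $\mch$, so it suffices to compare the two sequences coordinatewise.

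Then I would combine these observations. Because $\mch$ is an order isomorphism onto its image and commutes with $A$, it also commutes with every operator the well-founded construction derives from $A$ (the lower and upper operators and their least fixpoints), since these are defined purely from $A$ and from lattice operations that $\mch$ preserves. A transfinite induction then shows that $\mch$ sends the $i$-th element of the well-founded induction sequence of $\Apch$ to the $i$-th element of that of $\Ap$: the base case is $\mch(\bot)=\bot$, the successor case is the commutation of $\mch$ with the refinement step, and the limit case is preservation of least upper bounds. Passing to the limit, $\mch$ maps the well-founded fixpoint of $\Apch$ to that of $\Ap$. Since \I being the well-founded model of $\Apch$ means $(\I,\I)$ is this fixpoint and $\mch(\I,\I)=(\mch(\I),\mch(\I))$, this yields that $\mch(\I)$ is the well-founded model of $\Ap$.

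The main obstacle is the successor/refinement step: one must verify that $\mch$ commutes not merely with $A$ itself (which is Lemma~\ref{lem:new:operatorsagree}) but with the auxiliary least-fixpoint computations that the well-founded induction performs on the fixed coordinate. I expect this to follow formally from the fact that an isomorphism commuting with $A$ transports least fixpoints of corresponding monotone operators to least fixpoints, but it is the point that must be argued carefully rather than by direct appeal to the raw operator equation.
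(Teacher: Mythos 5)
Your proposal takes essentially the same route as the paper: the paper also derives the lemma directly from Lemma~\ref{lem:new:operatorsagree}, observing that $\struct\mapsto\mch(\struct)$ is an isomorphism onto a sublattice of the transformed structure lattice and that an isomorphism mapping \Apch to \Ap forces their well-founded models to agree. Your transfinite induction along the well-founded induction sequence (base, successor via the derived least-fixpoint refinements, limit via lubs) is just the explicit justification of the step the paper compresses into ``their well-founded models must agree,'' so the two arguments coincide in substance.
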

\begin{proof}
	Follows directly from lemma \ref{lem:new:operatorsagree}: the mapping $\struct\mapsto\mch(\struct)$ is an isomorphism between  \struclat{\I,\ch} and the sublattice of  \struclatp{\I,\ch'} consisting of those structures such that the interpretations of $\initpred$ and \indomain have an empty intersection. 
	As this isomorphism maps \Apch to \Ap, their well-founded models must agree.
\end{proof}

\begin{lemma}\label{lem:nsntf_to_several_theories}
	Let \D be a causal theory in \nestnf, \ch a \chfun for \D and \I a \voc-structure. Then $\I\models\D$ if and only if $\mch({\I})\models\Dt$ and $\mch({\I})\models \seltheo$ and $\mch(\I)\models\pos{\D}$.
\end{lemma}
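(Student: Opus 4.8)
The plan is to unfold the definition of $\I\models\D$ and match its two ingredients against the three conjuncts on the right. Throughout I read the left-hand side as the assertion that the \emph{given} $\ch$ witnesses $\I\models\D$, i.e.\ that (a) $(\I,\I)$ is the well-founded fixpoint of $\Apch$ and (b) $\D$ succeeds with $\ch$ in $\I$; the existential over \chfuns in the official definition is recovered afterwards, when the main theorem quantifies over the $\sel{\voc}$-part. I will show that (a) corresponds to $\mch(\I)\models\Dt$, that (b) corresponds to $\mch(\I)\models\pos{\D}$, and that $\mch(\I)\models\seltheo$ holds automatically because $\ch$ is a genuine \chfun. The conjunction of the three then matches $\I\models\D$.

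The $\seltheo$ conjunct is immediate: by construction $\mch(\I)$ interprets the symbols of $\sel{\voc}$ according to $\ch$ as in Lemma~\ref{lem:chfun-seltheo}, and since $\ch$ is a \chfun in $\domainof{\I}$, the one-to-one correspondence of that lemma gives $\mch(\I)|_{\sel{\voc}}\models\seltheo$, hence $\mch(\I)\models\seltheo$. Thus for a fixed $\ch$ this conjunct carries no information. For the $\pos{\D}$ conjunct I apply Lemma~\ref{lem:posseffectsettheory} with $\I'=\mch(\I)$: since $\mch(\I)|_{\sel{\voc}}$ is exactly the $\sel{\voc}$-structure $M$ corresponding to $\ch$, the lemma yields that $\D$ succeeds with respect to $\mch(\I)$ and $\ch$ iff $\mch(\I)\models\pos{\D}$. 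It remains to note that succeeding with respect to $\mch(\I)$ and succeeding in $\I$ coincide: both the relevance conditions $\condition{C}$ and the qualifications of the $\Csel$- and $\Cnew$-occurrences are $\voc$-formulas, and $\mch(\I)$ agrees with $\I$ on all of $\voc$ and has the same domain $\domainof{\I}$, so the two structures evaluate them identically. Hence (b) holds iff $\mch(\I)\models\pos{\D}$.

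The heart of the proof is matching (a) with $\mch(\I)\models\Dt$. Since the transformation removes all $\Cnew$-, $\Cor$- and $\Csel$-occurrences, $\Dt$ is deterministic, so $\mch(\I)\models\Dt$ is equivalent to $(\mch(\I),\mch(\I))$ being the well-founded fixpoint of the (trivial-\chfun) operator $\Ap$ of $\Dt$. The forward direction is exactly Lemma~\ref{lem:WFModelAgree}: if $(\I,\I)$ is the well-founded fixpoint of $\Apch$, then $\mch(\I)$ is the well-founded model of $\Ap$. For the converse I would exploit that, by Lemmas~\ref{lem:new:operatorsagree} and~\ref{lem:WFModelAgree}, $\mch$ is an isomorphism from $\struclat{\I,\ch}$ onto the sublattice of $\struclatp{\I}$ of structures whose interpretations of $\initpred$ and $\indomain$ are disjoint, and that it maps $\Apch$ to $\Ap$. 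Because $\mch(\I)\models\seltheo$ forces $\initpred$ to be the non-created elements and $\indomain$ the created ones, the well-founded fixpoint $(\mch(\I),\mch(\I))$ of $\Ap$ lies in this sublattice, hence in the image of $\mch$; under the isomorphism the well-founded fixpoint of $\Apch$ is mapped to that of $\Ap$, namely $(\mch(\I),\mch(\I))$, and injectivity of $\mch$ then forces the well-founded fixpoint of $\Apch$ to be $(\I,\I)$, giving (a). Combining the three correspondences yields the biconditional. The main obstacle is precisely this converse for $\Dt$: Lemma~\ref{lem:WFModelAgree} is stated in only one direction, so I must upgrade it to a genuine equivalence by using that $\mch$ is an isomorphism (not merely a morphism) onto the $\seltheo$-constrained sublattice, and verify that the well-founded fixpoint of $\Ap$ actually lands in its image.
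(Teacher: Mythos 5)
Your proof is correct and takes essentially the same route as the paper, whose entire proof is the one-line observation that the claim ``follows directly from Lemmas \ref{lem:WFModelAgree}, \ref{lem:chfun-seltheo} and \ref{lem:posseffectsettheory}'' --- exactly the three correspondences you spell out conjunct by conjunct. The additional care you take (reading the left-hand side as ``the given $\ch$ witnesses $\I\models\D$'', and upgrading the one-directional Lemma \ref{lem:WFModelAgree} to a biconditional via uniqueness of well-founded fixpoints and injectivity of $\mch$ onto the $\seltheo$-constrained sublattice) is precisely what the paper leaves implicit.
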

\begin{proof}
	Follows directly from Lemmas \ref{lem:WFModelAgree}, \ref{lem:chfun-seltheo} and \ref{lem:posseffectsettheory}. 
\end{proof}

\begin{proof}[Proof of Lemma \ref{lem:creation-free2deterministic}]
	Let \D be the \clog theory in \theory.
	We can now take as deterministic theory the theory consisting of \Dt, all \fo sentences in \theory, and the sentence $\seltheo \land \pos{\D}\land \forall x: \initpred(x)\lequiv\lnot\indomain(x)$, where the last formula excludes all structures not of the form $\mch(\I)$ for some \I (the created elements \indomain and the initial elements \initpred should form a partition of the domain). 
\end{proof}

\subsection{From General \foclog to NestNF}\label{ssec:clog2nestnf}

In the following definition we use $\D[C'/C]$ for the causal theory obtained from \D by replacing the occurrence of a \CEE $C$ by $C'$.
\begin{definition}
	Suppose \CVC{C}{\xxx} is an occurrence of a \CEE in \D. 
	With \emph{$\unnest(\D,C)$} we denote the causal theory $\Cand{\D[P(\xxx)/ C]}{\Call{\xxx}{P(\xxx)}{C}}$
	where $P$ is a new predicate symbol.
\end{definition}

\begin{lemma}\label{lem:clog2nestnf} 
	Every \foclog theory is \voc-equivalent with an \foclog theory in \nestnf.
\end{lemma}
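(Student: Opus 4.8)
The plan is to prove Lemma~\ref{lem:clog2nestnf} by repeatedly applying the \unnest operation, using two ingredients: each application of \unnest preserves \voc-equivalence, and a suitably chosen application strategy terminates with a theory in \nestnf. Since the \fo sentences of an \foclog theory play no role in the normal form, I work only on its (unique) \clog theory \D, leaving the sentences untouched; the final transformed theory is then the transformed \D together with the original sentences.

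First I would put \D into shape. Using associativity of \Cand, write \D as a \Cand-conjunction of its top-level conjuncts $C_1,\dots,C_m$, and replace each $C_i$ that is not already of the form $\Call{\xxx}{\varphi}{C'}$ by the equivalent $\Cif{\true}{C_i}$; this has empty variable context and guard \true, and is equivalent because restricting an effect set to \true is the identity. Now every conjunct is a block $\Call{\xxx}{\varphi_i}{C_i'}$, so it remains only to force each body $C_i'$ to have height at most one. For this I pick a body $C_i'$ of maximal height $h\geq 2$ and, as one round, apply \unnest to each immediate sub-expression $D$ of $C_i'$ that is not an atom-expression: each such step replaces $D$ in place by a fresh atom $P(\xxx)$ (with $\xxx$ the full variable context of the occurrence, so the new block binds all of $\xxx$) and adds the top-level block $\Call{\xxx}{P(\xxx)}{D}$, which already has the required shape with relevance condition $P(\xxx)$. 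After the round the body has height $1$, since all of its children are now atom-expressions.

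For termination I use the measure given by the multiset of heights of all block bodies, ordered by the well-founded multiset extension of the order on \nat. Unnesting is only needed when the chosen body $C_i'$ has height $h\geq 2$; since each extracted $D$ is an immediate sub-expression of $C_i'$, we have $\mathrm{height}(D)\leq h-1$, and after the round the element $h$ of the multiset is removed and replaced by one element equal to $1$ together with finitely many elements each $\leq h-1$. As $h\geq 2$, this is a strict decrease in the multiset ordering, so finitely many rounds suffice, and the process halts exactly when every body has height at most one, i.e.\ when the theory is in \nestnf.

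The main obstacle is the correctness step: showing that $\unnest(\D,C)$ is \voc-equivalent with \D. The fresh predicate $P$ is endogenous in the first conjunct $\D[P(\xxx)/C]$ (where it replaces $C$) and is used only as a relevance condition in the second conjunct $\Call{\xxx}{P(\xxx)}{C}$; because $P$ does not occur in $C$, there is no circular support through $P$. Intuitively $P(\ddd)$ is caused in the well-founded model exactly when the path leading to the occurrence of $C$ fires for the instance $\ddd$, and causing $P(\ddd)$ in turn triggers precisely the effects $C$ would have produced, so the two theories agree on all \voc-atoms while the value of $P$ is uniquely determined. Making this precise requires, as in Lemmas~\ref{lem:new:operatorsagree}--\ref{lem:WFModelAgree}, setting up a correspondence between models---a \chfun for \D extends to one for $\unnest(\D,C)$, with the selection and creation functions attached to occurrences inside $C$ carried over bijectively to the new block---and checking both that the immediate causality operators agree modulo $P$ (so their well-founded fixpoints agree) and that \D succeeds iff $\unnest(\D,C)$ does. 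Finally, since each step introduces a fresh predicate and \voc-equivalence composes across the finitely many rounds, the resulting theory in \nestnf is \voc-equivalent with the original \foclog theory.
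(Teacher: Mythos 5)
Your overall route is the paper's own: establish that a single application of \unnest preserves \voc-equivalence, then iterate; the paper's proof is exactly this, with the iteration left implicit (``repeated applications of the claim''). Your additions---wrapping top-level conjuncts as $\Cif{\ltrue}{C_i}$ (a \Call-block with empty tuple, harmless since restricting an \cset to \ltrue is the identity) and the multiset-of-heights termination measure---are correct and make explicit what the paper glosses over.

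The one genuine soft spot is the mechanism you propose for the key claim. You say you would check that ``the immediate causality operators agree modulo $P$ (so their well-founded fixpoints agree), as in Lemmas~\ref{lem:new:operatorsagree}--\ref{lem:WFModelAgree}.'' That template does not transfer: Lemma~\ref{lem:WFModelAgree} rests on a lattice isomorphism under which the two operators literally correspond, and no such correspondence exists for \unnest. In $\unnest(\D,C)=\Cand{\D[P(\xxx)/C]}{(\Call{\xxx}{P(\xxx)}{C})}$ one application of the operator derives the atoms $P(\ddd)$, and only the \emph{next} application derives the effects of $C$, whereas in \D both happen in a single application; so under any projection forgetting $P$ the operators disagree pointwise, and only their (well-founded) fixpoints align---which is precisely the nontrivial content to be proved. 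The paper closes this exact hole by observing that the operator of $\unnest(\D,C)$ is a \emph{part-to-whole monotone fixpoint extension} of the operator of \D and invoking the theorem of \cite{VennekensMWD07a} that well-founded models of such extensions agree. Your surrounding intuitions---fresh $P$ with no circular support, $P$ uniquely determined, bijective transfer of \chfuns and of the success condition---are the right reasons why that theorem applies, but without it (or a replacement such as an explicit transfinite induction relating the two well-founded induction sequences, with $P$ lagging one step) the parenthetical ``so their well-founded fixpoints agree'' is an unsupported leap rather than a checkable step.
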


\begin{proof}
First, we claim that for every \clog theory over \voc, 	$\D$ and $\unnest(\D,C)$ are \voc-equivalent.
It is easy to see that the two theories have the same \chfuns. 
Furthermore, the operator for $\unnest(\D,C)$ is a part-to-whole monotone fixpoint extension\footnote{Intuitively, a part-to-whole fixpoint extension means that all predicates only depend positively on the newly introduced predicates} (as defined in \cite{VennekensMWD07a}) of the operator for $\D$. In \cite{VennekensMWD07a} it is shown that in this case, their well-founded models agree, which proves our claim.  
The lemma now follows by repeated applications of the claim.\end{proof}

\begin{proof}[Proof of Theorem \ref{thm:main}]
	Follows directly by combining lemmas \ref{lem:clog2nestnf},  \ref{lem:creation-free2deterministic} and \ref{lem:deterministic2deff}.
	For transformations only defined on \clog theories, the extra \fo part remains unchanged.
\end{proof}

\subsection{\foclog and \foid}\label{ssec:foid}
An inductive definition (ID) \cite{tocl/DeneckerT08} is a set of rules of the form $\forall \xxx: P(\ttt)\lrule \varphi$, an \foid theory is a set of \fo sentences and IDs, and an \esoid theory is a theory of the form $\exists \PPP: \theory$, where $\theory$ is an \foid theory. 
A causal theory in \deff corresponds exactly to an ID: the \CEE $\Call{\xxx}{\varphi}{P(\ttt)}$ corresponds to the above rule  and the \Cand-conjunction of such \CEEs to the set of  corresponding rules. 
The partial immediate consequence operator for IDs defined in \cite{tocl/DeneckerT08} is exactly the partial immediate causality operator for the corresponding \clog theory. 
Combining this with Theorem \ref{thm:main}, we find (with \PPP the introduced symbols):
\begin{thm}\label{thm:foid}
	Every \foclog theory is equivalent with an \esoid formula of the form $\exists \PPP: \{\D,\theory\}$, where $\D$ is an ID and \theory is an \FO sentence.
\end{thm}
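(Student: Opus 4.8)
The plan is to combine Theorem \ref{thm:main} with the correspondence between \deff causal theories and inductive definitions that was just established. Theorem \ref{thm:main} states that every \foclog theory $\theory$ over $\voc$ is $\voc$-equivalent with an \foclog theory $\theory'$ in \deff over some extended vocabulary $\voc'$, where $\voc' = \voc \cup \PPP$ collects all the predicates introduced by the transformations (the $\creation_C$, $\selected_C$, $\choosefirst_C$, $\initpred$ and \indomain symbols from Section \ref{ssec:nestnf2deterministic}, together with the fresh predicates from \unnest in Section \ref{ssec:clog2nestnf}). So first I would invoke Theorem \ref{thm:main} to obtain such a $\theory'$, and unfold what \deff means: $\theory'$ consists of a single \clog theory $\D'$ in \deff together with a collection of \fo sentences.

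Next I would translate $\D'$ into logical objects. By the discussion in Section \ref{ssec:foid}, a \clog theory in \deff corresponds exactly to an inductive definition $\D$: each $C_i = \Call{\xxx}{\varphi_i}{P_i(\ttt_i)}$ becomes the rule $\forall \xxx: P_i(\ttt_i) \lrule \varphi_i$, and the \Cand-conjunction becomes the set of these rules. Moreover the partial immediate consequence operator for $\D$ coincides with the partial immediate causality operator for $\D'$, so a structure is a model of $\D'$ (in the \clog sense, via the well-founded fixpoint) exactly when it is a model of $\D$ (in the \foid sense). The \fo sentences of $\theory'$ can be gathered into a single \FO sentence \theory by taking their conjunction.

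It then remains to package the $\voc$-equivalence as the existential second-order statement. By definition of $\voc$-equivalence, each model of $\theory$ extends to a model of $\theory'$ and the $\voc$-restriction of each model of $\theory'$ is a model of $\theory$; since $\theory'$ is, on the nose, equivalent to the conjunction $\{\D, \theory\}$ over $\voc'$, a structure $\I$ over $\voc$ is a model of $\theory$ precisely when $\I$ can be expanded by an interpretation of $\PPP$ to a model of $\{\D, \theory\}$. That is exactly the semantics of $\exists \PPP: \{\D, \theory\}$, which gives the claimed equivalence.

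The main thing to be careful about is not a deep obstacle but a bookkeeping one: making sure that all introduced predicates are genuinely existentially quantified and that the notion of equivalence lines up. Theorem \ref{thm:main} only guarantees $\voc$-equivalence (equivalence modulo the new predicates), whereas Theorem \ref{thm:foid} claims plain equivalence with $\exists \PPP: \{\D, \theory\}$; the point is that existentially quantifying over exactly the introduced symbols $\PPP$ converts $\voc$-equivalence into ordinary equivalence, so I would state this reconciliation explicitly. A secondary point worth checking is that all the auxiliary \fo theories produced earlier (\seltheo, \pos{\D}, and the partition constraint $\forall x: \initpred(x)\lequiv\lnot\indomain(x)$) are folded into the single \FO sentence \theory, so that the final object really has the required shape of one ID plus one \FO sentence under a single block of second-order quantifiers.
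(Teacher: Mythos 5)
Your proposal takes exactly the paper's route: the paper obtains Theorem \ref{thm:foid} by combining Theorem \ref{thm:main} with the correspondence between \deff causal theories and IDs established in Section \ref{ssec:foid} (matching rules and matching partial immediate consequence/causality operators), with the existential quantifier over the introduced predicates \PPP turning \voc-equivalence into plain equivalence. Your bookkeeping remarks --- conjoining the auxiliary sentences (\seltheo, \pos{\D}, and the partition constraint) into the single \FO sentence \theory, and stating explicitly why $\exists \PPP$ reconciles the two notions of equivalence --- merely spell out what the paper leaves implicit, so the argument is correct and essentially identical.
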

Theorem \ref{thm:foid} implies that we can use reasoning engines for \foid in order to reason with \foclog, as long as we are careful with the newly introduced predicates.
We implemented a prototype of this transformation in the \idp system \cite{WarrenBook/DeCatBBD14}, it can be found at \cite{url:IDP-CLog}.

 \section{Example: Natural Numbers}\label{sec:ex}
\begin{example}
	Let \voc be a vocabulary consisting of predicates $\pred{Nat}/1, \pred{Succ}/2$ and $\pred{Zero}/1$ 
	and suppose \theory is the following theory: \marc{in journal versie: voeg extra constraint $\forall x: Nat(x)$ toe!}
\begin{ltheo}
	\begin{ldef}
		\Cnew{x}{ \Cand{\pred{Nat}(x)}{\pred{Zero}(x)}}\\
		\Call{x}{\pred{Nat}(x)}{ \Cnew{y}{ \Cand{\pred{Nat}(y)}{\pred{Succ}(x,y)}}}
	\end{ldef}
\end{ltheo}This theory defines a process creating the natural numbers. 
Transforming it to \nestnf yields:
\begin{ltheo}
	\begin{ldef}
		\Cnew{x}{ T_1(x) }\\
		\Call{x}{T_1(x)}{\pred{Nat}(x)}\\
		\Call{x}{T_1(x)}{\pred{Zero}(x)}\\
		\Call{x}{\pred{Nat}(x)}{ \Cnew{y}{ T_2(x,y)}}\\
		\Call{x,y}{T_2(x,y)}{\pred{Nat}(y)}\\
		\Call{x,y}{T_2(x,y)}{\pred{Succ}(x,y)},
	\end{ldef}
\end{ltheo}
where $T_1$ and $T_2$ are auxiliary symbols.
	Transforming the resulting theory into deterministic \clog requires the addition of more auxiliary symbols $\initpred/1, \indomain/1, \creation_1/1$ and $\creation_2/2$ and results in the following \clog theory (together with a set of \FO-constraints):
	
\begin{ltheo}
	\begin{ldef}
		\Call{x}{\creation_1(x)}{\Cand{\indomain(x)}{ T_1(x)} }\\
		\Call{x}{(\indomain(x)\lor \initpred(x))\land T_1(x)}{\pred{Nat}(x)}\\
 		\Call{x}{(\indomain(x)\lor \initpred(x))\land T_1(x)}{\pred{Zero}(x)}\\
 		\Call{x, y}{(\indomain(x)\lor \initpred(x))\land \pred{Nat}(x)\land \creation_2(x,y)}{ \\ \quad \Cand{\indomain(y)}{ T_2(x,y)}}\\
 		\Call{x,y}{(\indomain(x)\lor \initpred(x))\land (\indomain(y)\lor \initpred(y))\land T_2(x,y)}{ \\ \quad \pred{Nat}(y)}\\
 		\Call{x,y}{(\indomain(x)\lor \initpred(x))\land (\indomain(y)\lor \initpred(y))\land T_2(x,y)}{ \\ \quad \pred{Succ}(x,y)}
	\end{ldef}
\end{ltheo}
This example shows that the proposed transformation is in fact too complex.
	E.g., here, almost all occurrences of $\indomain(x)\lor \initpred(x)$ are not needed.
	This kind of redundancies can be eliminated by executing the three transformations (from Sections \ref{ssec:deterministic2deff}, \ref{ssec:nestnf2deterministic} and \ref{ssec:clog2nestnf}) simultaneously. In that case, we would get the simpler deterministic theory:
\begin{ltheo}
	\begin{ldef}
		\Call{ x}{ \creation_1(x)}{ 
			\Cand{ \pred{Nat}(x)}{ \Cand{\pred{Zero}(x)}{\indomain(x) } 			}}\\
		\Call{ x, y}{ (\indomain(x) \lor \initpred(x)) \land \pred{Nat}(x) \land \creation_2(x,y) }{\\ \quad
			\Cand{\pred{Nat}(y)}{\Cand{\pred{Succ}(x,y)}{ \indomain(y)}}	}
	\end{ldef}
	\end{ltheo}
	with several \fo sentences:
	\begin{ltheo}
	\forall x: \indomain(x)\lequiv\lnot\initpred(x)\\
	\forall y: \initpred(y) \lequiv \lnot (\creation_1(y) \vee \exists x: \creation_2(x,y)).\\
	\exists x: \creation_1(x).\\
 	\forall x, y: \creation_1(x)\land \creation_1(y)\limplies x = y.\\
 	\forall x, y, z: \creation_2(x,y)\land \creation_1(x,z)\limplies y = z.\\
 	\forall x, y, z: \creation_1(y)\land \creation_1(x,z)\limplies y = z.\\
 	\forall x[\pred{Nat}(x)]:\exists y: \creation_2(x,y).
\end{ltheo}  
These sentences express the well-known constraints on \nat: there is at least one natural number (identified by $\creation_1$), and every number has a successor. Furthermore the initial element and the successor elements are unique, and all are different. Natural numbers are defined as zero and all elements reachable from zero by the successor relation. 
	The theory we started from is much more compact and much more readable than any \foid theory defining natural numbers. This shows the Knowledge Representation power of \foidplus.
\end{example}
	
 \section{Complexity Results}\label{sec:complexity}					
In this section, we provide complexity results. 
We focus on the \clog fragment of \foclog here, since complexity for \fo is  well-studied. 
First, we formally define the inference methods of interest.
\subsection{Inference Tasks}

\begin{definition}
	The \emph{model checking} inference takes as input a \clog theory \D and a finite (two-valued) structure \I.
	It returns true if $\I\models\D$ and false otherwise.
\end{definition}
\todo{1 reviewer vond mx niet duidelijk: dat dit het uiteindelijk resultaat ook eindig is...}
\begin{definition}
	The \emph{model expansion} inference takes as input a \clog theory \D and a partial structure \I with finite two-valued domain. 
	It returns a model of \D more precise than \I if one exists and  ``unsat'' otherwise.
\end{definition}


\begin{definition}
	The \emph{endogenous model expansion} inference is a special case of model expansion where \I is two-valued on \topen symbols of \D and completely unknown on \tcaused symbols.
\end{definition}

The next inference is related to database applications. 
In the database world, languages with object creation have also been defined \cite{aw/AbiteboulHV95}. 
A query in such a language can create extra objects, but the interpretation of exogenous symbols (tables in the database) is fixed, i.e., exogenous symbols are always false on newly created elements.

\begin{definition}
	The \emph{unbounded query} inference takes as input  a  \clog theory \D, a partial structure \I with finite two-valued domain such that \I is two-valued on \topen symbols of \D and completely unknown on \tcaused symbols of \D, and a propositional atom $P$. This inference returns true if there exist i) a structure \J, with $\domainof{\J}\supseteq\domainof{\I}$, $\sigma^\J=\sigma^\I$ for exogenous symbols $\sigma$,  and $P^\J=\ltrue$  and ii) a \chfun \ch in \domainof{\J} with $\chin=\domainof{\I}$, such that \J is a model of \D with \chfun \ch. It returns false otherwise.
\end{definition}

\subsection{Complexity of Inference Tasks}
In this section, we study the datacomplexity of the above inference tasks, i.e., the complexity for fixed $\D$. 

\todo{Given the equivalence of FO(C) and FO(ID) I guess the complexity results holds
also for FO(ID). If so, please state it in the paper or connect these with
previous results for FO(ID).}

\begin{lemma}\label{lem:ApplyApch}
	For a finite structure \I, computing $\Apch(\I)$ is polynomial in the size of \I and \ch.
\end{lemma}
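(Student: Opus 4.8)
The plan is to observe that the partial structure $\Apch(\I)$ can be read off directly from the effect set $\Valueof{\I}{\ch}{\D}$, and then to show that this effect set is computable by a single recursive traversal of the (fixed) parse tree of $\D$ whose total cost is polynomial in $|\domainof{\I}|$ and in the size of $\ch$. Throughout I use that we are in the data-complexity setting, so $\D$ is fixed; in particular the number of occurrences of \CEEs in $\D$, the length $n_C$ of every variable context, the arities of all \tcaused predicates, and the height of $\D$ are all constants depending only on $\D$.

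First I would note that, by the definition of $\Apch$, the output is determined by the effect set $\Valueof{\I}{\ch}{\D}$ together with the interpretations of the \topen symbols, which inside the lattice $\struclat{\I,\ch}$ are inherited unchanged from $\I$: for every \tcaused predicate $P$ and tuple $\ddd$ the value of $P(\ddd)$ is $\Valueof{\I}{\ch}{\D}(P(\ddd))$, the value of each $\indomain(d)$ is $\Valueof{\I}{\ch}{\D}(\indomain(d))$, and $\domainof{\Apch(\I)}$ is $\ltrue$ exactly on $\chin$. Since $\D$ is fixed there are only polynomially many domain atoms over the endogenous predicates and $\indomain$ (at most $O(|\domainof{\I}|^{k})$ for $k$ the maximal arity, with a constant number of such predicates), so the effect set and the resulting partial structure have polynomial size and can be assembled from the effect set in a single pass. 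It therefore suffices to bound the cost of computing $\Valueof{\I}{\ch}{\D}$.

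Next I would follow the recursive definition of the effect set over the occurrences of $\D$. For an occurrence $C$ whose variable context $\xxx$ has length $n_C$, its effect set must be evaluated only for the instances given by assignments $\ddd \in (\domainof{\I})^{n_C}$ of domain elements to $\xxx$; there are at most $|\domainof{\I}|^{n_C}$ of these, which is polynomial because $n_C$ is a constant, and there are constantly many occurrences, so the number of (occurrence, instance) pairs is polynomial. For the cost per pair: the atom, \Cand, \Cor and \Cnew cases need only unions, restrictions $\resset{\cdot}{\cdot}$ and the combinator $r(\cdot,\cdot)$ on partial sets of polynomial size, together with lookups of $\cho{C}$, $\chs{C}$ and $\chn{C}$, which are table lookups polynomial in the size of $\ch$. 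The remaining cases (conditional expressions \Cif{\varphi}{C'}, and \Call- and \Csel-expressions) additionally evaluate an \FO formula $\varphi$ occurring in the fixed $\D$ in a finite partial structure; by the standard bound for \FO this Kleene evaluation (Definition~\ref{def:kleeneval}) is a bottom-up pass over the parse tree of $\varphi$ and is polynomial in $|\domainof{\I}|$. Hence each pair is processed in polynomial time.

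Multiplying the polynomial number of pairs by the polynomial cost per pair yields a polynomial bound. The one point that requires care, and which I regard as the main obstacle, is ruling out an exponential blow-up: a naive recursion descends through $\domainof{\I}$ once per nested quantifier, giving a running time of order $|\domainof{\I}|^{d}$ with $d$ the nesting depth. This stays polynomial precisely because for fixed $\D$ the depth $d$ is a constant, so the exponent is bounded (the degree of the polynomial depending on $\D$, as is expected in data complexity); computing and storing the effect set of each occurrence once per instance, rather than re-expanding shared subexpressions, keeps the bound clean.
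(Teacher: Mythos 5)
Your proof is correct and takes essentially the same route as the paper's own (much terser) argument: both reduce computing $\Apch(\I)$ to evaluating the fixed collection of first-order formulas occurring in $\D$ once per instance of each occurrence---polynomially many instances, with the exponent bounded by the nesting depth of the fixed $\D$---at polynomial cost per Kleene evaluation in a finite partial structure. Your version simply spells out what the paper leaves implicit: the reduction to the effect set $\Valueof{\I}{\ch}{\D}$, the per-case costs including the $\ch$-lookups, and the memoization per (occurrence, instance) pair that keeps the recursion from re-expanding shared subexpressions.
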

\begin{proof}
	In order to compute $\Apch(\I)$, we need to evaluate a fixed number of \FO-formulas a polynomial number of times (with exponent in the nesting depth of \D). 
	As evaluating a fixed \FO formula in the context of a partial structure is polynomial, the result follows.
\end{proof}

\begin{thm}\label{thm:wfe}
For a finite structure \I, the task of computing the $\Apch$-well-founded model of \D in the lattice $\struclat{\I,\ch}$ is polynomial in the size of \I and \ch.
\end{thm}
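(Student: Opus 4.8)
The plan is to reduce the computation of the $\Apch$-well-founded model to a nested fixpoint iteration and to bound the number of iterations at each level by the height of the relevant (bi)lattice, invoking Approximation Fixpoint Theory \cite{DeneckerBV12} for the characterisation of the well-founded fixpoint and Lemma~\ref{lem:ApplyApch} for the cost of a single operator application. First I would observe that, since \I is finite, the lattice $\struclat{\I,\ch}$ is finite and its \emph{height} (the length of a longest strictly increasing chain in the truth order) is polynomial in the size of \I: a structure in this lattice is determined by the interpretations of the fixed number of \tcaused predicates of \D together with the membership predicate \indomain, whose argument tuples all range over the finite set \domainof{\I}. Hence the number of relevant domain atoms is polynomial in $|\domainof{\I}|$ for fixed \D, and a strictly increasing chain can change each such atom at most once. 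The same counting shows that the \emph{precision height} of the bilattice $(\struclat{\I,\ch})^2$ is polynomial as well.

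Next I would recall from \cite{DeneckerBV12} that the $\Apch$-well-founded model is the least fixpoint, in the precision order, of the stable operator $S$ associated with $\Apch$. Computing $S$ on a given pair amounts to computing two least fixpoints of the partial applications of $\Apch$ (one for the lower bound with the upper bound frozen, and dually one for the upper bound); each such least fixpoint is obtained by iterating the operator from the truth-least element of $\struclat{\I,\ch}$, which converges in a number of steps bounded by the polynomial truth-height established above. By Lemma~\ref{lem:ApplyApch}, each individual application of $\Apch$ is polynomial in the size of \I and \ch, so each complete evaluation of $S$ is polynomial.

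Finally I would iterate $S$ from the least precise element $(\bot,\top)$ to reach its least fixpoint. Each strict step increases precision, so by the polynomial precision-height of $(\struclat{\I,\ch})^2$ only polynomially many outer steps are required; multiplying the polynomial number of outer steps by the polynomial cost of each evaluation of $S$ gives the overall polynomial bound. The main obstacle I expect is the bookkeeping needed to justify that all three measures — the truth-height of $\struclat{\I,\ch}$, the precision-height of its square, and the per-step cost — are \emph{simultaneously} polynomial; once these are in place, the result follows directly from the standard characterisation of the well-founded fixpoint. An equivalent route would replace the stable-operator characterisation by the well-founded induction of \cite{DeneckerBV12}, in which each refinement step likewise makes at least one domain atom precise and is individually polynomial, yielding the same conclusion.
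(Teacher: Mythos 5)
Your proposal is correct and follows essentially the same route as the paper's proof, which simply asserts that the well-founded model is reached after polynomially many applications of $\Apch$, each polynomial by Lemma~\ref{lem:ApplyApch}. Your additional bookkeeping (the polynomial truth-height of $\struclat{\I,\ch}$, the precision-height of the bilattice, and the stable-operator/well-founded-induction characterisation from \cite{DeneckerBV12}) is exactly the detail the paper's two-sentence argument leaves implicit.
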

\begin{proof}
	Calculating the well-founded model of an approximator can be done with a polynomial number of applications of the approximator. 
	Furthermore, Lemma \ref{lem:ApplyApch} guarantees that each of these applications is polynomial as well.
\end{proof}

\begin{thm}
	Model expansion for \clog is NP-complete.
\end{thm}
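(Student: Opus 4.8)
The plan is to prove the two directions of NP-completeness separately: membership in NP and NP-hardness.

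For \textbf{membership in NP}, I would use the characterization of models given in the semantics together with the polynomial bounds already established. A structure $\I$ is a model of $\D$ precisely when there exists a \chfun $\ch$ such that $(\I,\I)$ is the well-founded fixpoint of $\Apch$ and $\D$ succeeds with $\ch$ in $\I$. For model expansion with input partial structure $\I_0$ over a finite domain, the certificate is a candidate model $\J$ more precise than $\I_0$ together with a \chfun $\ch$. The key observation is that the domain cannot blow up: since every created element is the unique image of some $\chn{C}$ function and the functions have disjoint images, the number of created elements is bounded by a polynomial in the size of $\I_0$ (the number of instances of \Cnew-expressions is polynomial because the nesting depth of $\D$ is fixed). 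Hence both $\J$ and $\ch$ have size polynomial in the input and form a valid polynomial certificate. Verification then amounts to: (i) checking that $\ch$ is a genuine \chfun (functionality, injectivity, disjoint images --- all polynomial); (ii) computing the $\Apch$-well-founded model in $\struclat{\J,\ch}$ and checking it equals $(\J,\J)$, which is polynomial by Theorem~\ref{thm:wfe}; and (iii) checking that $\D$ succeeds with $\ch$ in $\J$, which requires evaluating a fixed number of \FO qualifications, hence polynomial. This establishes membership in NP.

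For \textbf{NP-hardness}, I would reduce from a known NP-complete problem, the natural candidate being propositional SAT or graph 3-colouring, exploiting the non-deterministic choice available through \Cor- or \Csel-expressions. The cleanest route is to fix a single \clog theory $\D$ and encode instances into the structure $\I$, respecting that we measure data complexity (fixed $\D$). For instance, using \Cor-expressions one can non-deterministically guess a truth assignment to the atoms of a propositional formula encoded in $\I$, and then use \Cif-guarded causes together with an \FO constraint to enforce that the guessed assignment satisfies every clause. Because \foclog includes full \FO constraints, the satisfaction check is directly expressible; the non-determinism of the \chfun provides the guessing power needed for NP-hardness.

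I expect the \textbf{main obstacle} to be the membership argument, specifically bounding the size of the certificate in the presence of \Cnew-expressions, since created elements enlarge the domain and one must argue carefully that only polynomially many fresh elements can ever be created and that the well-founded computation over the variable-domain lattice $\struclat{\J,\ch}$ remains polynomial. A convenient way to sidestep much of this difficulty is to first apply Theorem~\ref{thm:main} (or Theorem~\ref{thm:foid}), transforming $\D$ into an equivalent \esoid theory: since the transformation is fixed for fixed $\D$ and the new predicates have bounded arity, the resulting object-creation-free formulation over a fixed domain (with the initial/created partition handled by the auxiliary predicates) reduces model expansion for \clog to model expansion for \esoid, whose NP-membership follows from the polynomial well-founded computation for inductive definitions. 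The hardness direction is comparatively routine once a suitable fixed $\D$ with genuine non-determinism is exhibited.
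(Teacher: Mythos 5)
Your NP-membership argument is essentially the paper's: guess a model $\J$ together with a \chfun $\ch$, verify that $(\J,\J)$ is the well-founded fixpoint of $\Apch$ using Theorem~\ref{thm:wfe}, and check that $\D$ succeeds (the paper cites Lemma~\ref{lem:posseffectsettheory} for this; your direct evaluation of the qualifications of the polynomially many relevant instances amounts to the same thing). One correction, though: the obstacle you identify as the main difficulty is actually vacuous under the paper's definitions. The input partial structure $\I$ has a finite \emph{two-valued} domain, and $\J \geqp \I$ requires $\ctinter{\domain}{\J}\supseteq\ctinter{\domain}{\I}$ and $\ptinter{\domain}{\J}\subseteq\ptinter{\domain}{\I}$; since these coincide for $\I$, any model more precise than $\I$ has exactly the input domain, and the \chfun is a \chfun in that fixed domain. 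So the certificate is polynomial for trivial reasons (fixed $\D$, hence fixed arities $n_C$), with no analysis of how many elements \Cnew can create --- domain growth is only an issue for the \emph{unbounded query} inference, which is precisely why that inference escapes NP (cf.\ the computational-completeness corollary). Your fallback detour through Theorem~\ref{thm:main}/\ref{thm:foid} is sound but unnecessary for the same reason.

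For NP-hardness you take a genuinely different route from the paper's stated proof: a self-contained data-complexity reduction from SAT, using \Cor-expressions to guess an assignment with a fixed theory and instances encoded in the structure. The paper instead invokes the known NP-hardness of model expansion for inductive definitions together with the observation of Section~\ref{ssec:foid} that \deff theories, hence IDs, are a subclass of \clog --- cheaper because it reuses an existing result, while yours needs no external reference; in fact the paper carries out your reduction explicitly in Example~\ref{ex:SAT} and remarks that it re-derives hardness. One fix your sketch needs: the model expansion inference for \clog takes only a causal theory as input, so you may not add a standalone \FO constraint to enforce satisfaction. Instead, fold the check into the causal theory --- a \Cif-expression whose condition is the \FO sentence checking all clauses, causing an atom $\pred{Sol}$ --- and require $\pred{Sol}$ to be true in the input partial structure, exactly as $\D_{SAT}$ does. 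With that adjustment your argument goes through.
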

\begin{proof}
	After guessing a model and a \chfun, Theorem \ref{thm:wfe} guarantees that checking that this is the well-founded model is polynomial. 
	Lemma \ref{lem:posseffectsettheory} shows that checking whether \D succeeds is polynomial as well.  Thus, model expansion is in NP.
	
	NP-hardness follows from the fact that model expansion for inductive definitions is NP-hard  and inductive definitions are shown to be a subclass of \clog theories, as argued in Section \ref{ssec:foid}.
\end{proof}


\begin{example}\label{ex:SAT}
	We show how the SAT-problem can be encoded as model checking for \clog. 
	Consider a vocabulary $\voc^{SAT}_{IN}$ with unary predicates \pred{Cl} and  \pred{PS} and with  binary predicates \pred{Pos} and \pred{Neg}.
	Every SAT-problem can be encoded as a  $\voc^{SAT}_{IN}$-structure: $\pred{Cl}$ and $\pred{PS}$ are interpreted as the sets of clauses and propositional symbols respectively, $\pred{Pos}(c,p)$ (respectively $\pred{Neg}(c,p)$) holds if clause $c$ contains the literal $p$ (respectively $\lnot p$).

%
	
	We now extend $\voc^{SAT}_{IN}$ to a vocabulary $\voc^{SAT}_{ALL}$ with unary  predicates  \pred{Tr} and \pred{Fa} and a propositional symbol $\pred{Sol}$. 
	\pred{Tr} and \pred{Fa} encode an assignment of values (true or false) to propositional symbols, $\pred{Sol}$ means that the encoded assignment is a solution to the SAT problem. 
	Let $\D_{SAT}$ be the following causal theory:
	\begin{align*}
		&\Call{p}{\pred{PS}(p)}{\Cor{\pred{Tr}(p)}{\pred{Fa}(p)}}\\
		& \Cif{\forall c[\pred{Cl}(c)]: \exists p: \\ &\qquad(\pred{Pos}(c,p)\land \pred{Tr}(p) \lor  (\pred{Neg}(c,p)\land \pred{Fa}(p))}{\pred{Sol}}
	\end{align*}
	The first rules guesses an assignment.  
	The second rule says that \pred{Sol} holds if every clause has at least one true literal. 
	Model expansion of that theory with a structure interpreting $\voc^{SAT}_{IN}$ according to a SAT problem and interpreting \pred{Sol} as true, is equivalent with solving that SAT problem, hence model expansion is NP-hard (which we already knew).
%
	In order to show that model  \emph{checking} is NP-hard, we add the following \CEE to the theory $\D_{SAT}$.
	\begin{align*}
		&\Cif{\pred{Sol}}{(\Call{p}{\pred{PS}(p)}{\Cand{\pred{Tr}(p)}{\pred{Fa}(p)})}}
	\end{align*}

	Basically, this rules tells us to forget the assignment once we have derived that it is a model (i.e., we hide the witness of the NP problem). 
	Now, the original SAT problem has a solution if and only if the structure interpreting symbols in $\voc^{SAT}_{IN}$ according to a SAT problem and interpreting all other symbols as constant true is a model of the extended theory. Hence:
\end{example}


\begin{thm}
Model checking for \clog is NP-complete. 
\end{thm}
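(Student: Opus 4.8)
The plan is to prove the two directions of NP-completeness separately, leaning on the machinery already set up.

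For membership in NP, I would observe that model checking is exactly the special case of model expansion in which the input structure \I is already two-valued: the only structure more precise than a two-valued \I is \I itself, so model expansion applied to \I returns \I precisely when $\I\modelsaft\D$. Since model expansion has already been shown to be in NP, model checking is in NP as well. Spelling this out directly: because \I is a fixed finite structure, a \chfun \ch in \domainof{\I} has size polynomial in \I for fixed \D, so I would guess \ch nondeterministically and then verify in polynomial time that $(\I,\I)$ is the well-founded fixpoint of \Apch (polynomial by Theorem~\ref{thm:wfe}) and that \D succeeds with \ch in \I (polynomial, exactly as in the proof that model expansion is in NP, cf.\ Lemma~\ref{lem:posseffectsettheory}). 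If both checks pass, accept.

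For NP-hardness I would invoke the reduction already constructed in Example~\ref{ex:SAT}. Given a SAT instance I encode it as the $\voc^{SAT}_{IN}$-structure described there and form the theory consisting of $\D_{SAT}$ together with the hiding rule $\Cif{\pred{Sol}}{(\Call{p}{\pred{PS}(p)}{\Cand{\pred{Tr}(p)}{\pred{Fa}(p)}})}$; the structure to be checked interprets the input predicates according to the instance and all remaining (endogenous) symbols as constantly true. This map is plainly polynomial, so it only remains to establish the biconditional: the all-true structure is a model of the extended theory if and only if the SAT instance is satisfiable. Together with membership this yields NP-completeness.

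The step I expect to require the most care is this biconditional, because of the circular interaction between \pred{Sol} and the hiding rule under the well-founded semantics. For the forward direction I would take a satisfying assignment, build a \chfun whose \Cor-choices realise it, and trace \Apch: the first rule causes exactly the chosen literals, which by satisfaction founds \pred{Sol}, which then fires the hiding rule and causes every \pred{Tr}(p) and \pred{Fa}(p); as the theory contains no \Csel- or \Cnew-expressions the success condition is vacuous, so the all-true structure is indeed the well-founded model. For the converse I would argue that if the all-true structure is the well-founded model for some \chfun, then the assignment encoded by that \chfun's \Cor-choices must itself satisfy every clause, since the only route to making all of \pred{Tr} and \pred{Fa} true passes through \pred{Sol}, and \pred{Sol} is founded only when the clause condition already holds under the initially caused literals; an unsatisfying choice leaves the derivation of \pred{Sol} unfounded and hence excludes the all-true structure. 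Getting this foundedness argument exactly right, so that ``all true'' is the well-founded model precisely when the instance is satisfiable, is the main obstacle.
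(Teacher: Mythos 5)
Your proposal is correct and takes essentially the same route as the paper: membership in NP by guessing a \chfun and verifying the well-founded fixpoint and the success condition in polynomial time (via Theorem~\ref{thm:wfe} and Lemma~\ref{lem:posseffectsettheory}, exactly as in the model-expansion theorem), and NP-hardness via the SAT reduction with the hiding rule from Example~\ref{ex:SAT}. Your explicit foundedness analysis of the biconditional --- that \pred{Sol} is founded only when the \Cor-choices already satisfy every clause, so the all-true structure is the well-founded model precisely for satisfiable instances --- is the detail the paper leaves implicit, and it is right.
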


%

Model checking might be a hard task but in certain cases (including for $\D_{SAT}$) endogenous model expansion is not.
The results in Theorem \ref{thm:wfe} can sometimes be used to generate models, if we have guarantees to end in a state where \D succeeds. 

\begin{thm}\label{thm:modelGenerationP}
 If \D is a total\footnote{A causal theory is \emph{total} if for every \chfun \ch, $w(\Apch)$ is two-valued, i.e., roughly, if it does not contain relevant loops over negation.}   causal theory without \Cnew and \Csel-expressions, endogenous model expansion is in P.
\end{thm}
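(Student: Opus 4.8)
The plan is to observe that, once \Cnew- and \Csel-expressions are excluded, the whole search collapses: no \chfun can make the theory fail, so \emph{every} well-founded model is already a genuine model, and it suffices to compute a single one.

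First I would note that a \chfun \ch for such a \D consists solely of the \Cor-selection functions $\cho{C}$, since there are no \Csel- or \Cnew-occurrences to interpret. Consequently the requirement that \D \emph{succeeds} with \ch is vacuously met: its definition quantifies only over relevant occurrences of \Csel- and \Cnew-expressions, of which there are none. Moreover, in the absence of \Cnew-expressions the domain is fixed ($\chin=\domainof{\I}$), so the lattice $\struclat{\I,\ch}$ consists exactly of the structures with domain \domainof{\I} that agree with \I on the \topen symbols.

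The algorithm is then: pick an arbitrary \chfun \ch, e.g.\ set $\cho{C}(\ddd)=1$ for every \Cor-occurrence $C$ and every tuple \ddd; for fixed \D this has size polynomial in $|\domainof{\I}|$. Compute the \Apch-well-founded model \J in $\struclat{\I,\ch}$; by Theorem \ref{thm:wfe} this is polynomial in $|\I|$ and $|\ch|$, hence polynomial in $|\I|$. Because \D is total, \J is two-valued, i.e.\ a complete structure, so $(\J,\J)$ is the well-founded fixpoint of \Apch and \D succeeds with \ch in \J; thus $\J\modelsaft\D$. Finally, \J lies in $\struclat{\I,\ch}$, so it agrees with \I on the \topen symbols and is two-valued on the \tcaused symbols, on which \I is unknown; hence $\J\geqp\I$, and \J is a correct output. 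In particular a model always exists, so the ``unsat'' branch is never taken.

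The individual steps are routine, so the main thing to get right is the bookkeeping that ties them together: verifying that the success condition is genuinely vacuous here, that an arbitrary \chfun has polynomial size and yields a lattice whose every member already refines \I on the endogenous symbols, and that totality applies to the particular \ch chosen (which it does, since totality is stated universally over all \chfuns). Once these are in place, polynomiality is immediate from Theorem \ref{thm:wfe}.
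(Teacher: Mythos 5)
Your proof is correct and follows exactly the route the paper intends (and leaves implicit): since there are no \Csel- or \Cnew-expressions, any \chfun works, the success condition is vacuous, the domain is fixed, and totality plus Theorem \ref{thm:wfe} yield a two-valued well-founded model in polynomial time. Your write-up simply makes explicit the bookkeeping (vacuity of success, polynomial size of the chosen \chfun, membership of the output in $\struclat{\I,\ch}$) that the paper's surrounding remarks take for granted.
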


Note that Theorem \ref{thm:modelGenerationP} does not contradict Example \ref{ex:SAT} since in that example, $Sol$ is interpreted as true in the input structure, i.e., the performed inference is not endogenous model expansion.
It is future work to generalise Theorem \ref{thm:modelGenerationP}, i.e., to research which are sufficient restrictions on \D such that model expansion is in P.

It is a well-known result in database theory that query languages combining recursion and object-creation are computationally complete~\cite{aw/AbiteboulHV95}; \foidplus can be seen as such a language.

\newcommand\wn{\m{\pred{while_{new}}}}
\begin{thm}
Unbounded querying can simulate the language \wn from \cite{aw/AbiteboulHV95}.
\end{thm}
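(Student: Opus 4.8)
The plan is to take an arbitrary \wn program $\pi$ over an input schema and to construct, uniformly in $\pi$, a \clog theory $\D_\pi$ whose exogenous symbols are exactly the input relations and for which the unbounded query answer, for a suitable propositional atom $P$, coincides with the result of running $\pi$ on the input database. The guiding observation is that each primitive of \wn has a direct \clog counterpart. Value invention in \wn is precisely the \Cnew-construct: the \chfun makes invented values injective and fresh, the condition $\chin=\domainof{\I}$ forces the created elements to be exactly the objects invented during the run, and exogenous relations remain fixed and hence false on them, matching the database setting stated above. The relational-algebra expressions and the emptiness tests used in assignments and loop guards are first-order definable and therefore usable directly as the qualifications $\varphi$ of \Cif- and \Call-expressions, while the iterative behaviour of \wn is to be captured by \clog's recursion.

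The only genuine work is simulating sequencing and while-loops, since \clog evaluates a \emph{single} well-founded fixpoint rather than a sequence of them. I would resolve this by materialising a \emph{time axis} of freshly created objects and indexing the entire computation by it. Concretely, as in the natural-numbers construction of Section~\ref{sec:ex}, a \Cnew-expression nested under a \Call builds a successor chain of stage objects $s_0,s_1,s_2,\dots$; for each relation variable $R$ of $\pi$ I introduce an endogenous predicate $\widehat{R}$ carrying an extra stage argument, initialise $\widehat{R}(s_0,\cdot)$ from the exogenous input, and add causal rules that cause the contents of every variable at stage $s_{t+1}$ from its contents at $s_t$ according to the statement executed at that step. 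A program-counter predicate, also indexed by stage, sequences the statements and encodes loop entry and exit using the (first-order) loop guard; value-invention statements are simulated by a \Cnew-expression fired at the appropriate stage. Because a deterministic \wn run is reproduced essentially uniquely here, the only real freedom left to the \chfun is the length of the chain and the identity of invented objects.

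For correctness I would argue that, for the \chfun \ch that builds exactly as many stages as $\pi$ needs and makes the creation choices dictated by $\pi$'s invented values, the transition rules are stratified along the stage chain, so that the well-founded model of $\Apch$ assigns to each $\widehat{R}(s_t,\cdot)$ precisely the contents of $R$ after $t$ steps of $\pi$. Taking $P$ to be caused exactly when the program-counter reaches the halting instruction in an accepting configuration then makes \J a model of $\D_\pi$ with $P^{\J}=\ltrue$ iff $\pi$ halts and accepts; since the unbounded query existentially quantifies over \J and over \ch, a long-enough accepting run exists iff $\pi$ accepts the input, which is the claimed simulation. A non-terminating run never causes $P$, so it correctly yields answer false.

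The hard part is precisely this translation of imperative control flow into one declarative fixpoint: I must verify that indexing by the created time axis genuinely stratifies the immediate-causality operator, so that the global well-founded fixpoint factors as the step-by-step transition of $\pi$, and that \Cnew-based object invention composes correctly with this staging, including the bookkeeping that keeps objects invented at different stages distinct (guaranteed by the disjointness of the images of the \chn-functions). Once stratification along the stage chain is established, the remaining verifications are routine propagations of the operator definition, and the first-order encodings of the relational operators and guards are immediate.
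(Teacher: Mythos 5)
Your proposal is correct and takes essentially the same route as the paper: both simulate \wn by using the \Cnew-based natural-numbers construction to materialise a time axis, adding a stage argument to every relation, rendering \texttt{while} as a \Call over time points satisfying the guard and \texttt{new} as a \Cnew fired at the successor stage, with acceptance read off via a designated atom and the existential quantification over the \chfun being harmless because the run is deterministic up to the identity of invented objects. Your explicit program-counter predicate and the stratification-along-the-stage-chain correctness argument are simply a more detailed working-out of bookkeeping that the paper's two-line sketch leaves implicit, not a different method.
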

\begin{proof}
We already showed that we can create the natural numbers in \clog. 
Once we have natural numbers and the successor function $\pred{Succ}$, we add one extra argument to every symbol (this argument represents time). Now, we encode the looping construct from  \pred{while_{new}} as follows. An expression of the form {\tt while P do s} corresponds to the \CEE: 
$\Call{t}{P(t)}{ C},$
where $C$ is the translation of the expression $s$.
An expression {\tt P = new Q} corresponds to a \CEE (where the variable $t$ should be bound by a surrounding \pred{while}).
\[\Call{\xxx, t'}{\pred{Succ}(t,t')}{\Cnew{y}{ P(\xxx,y,t')} \lrule Q(\xxx,t).}\qedhere\]
\end{proof}
Now, it follows immediately from \cite{aw/AbiteboulHV95} that
\begin{corollary}
For every decidable class $\mathcal{S}$ of finite structures closed under isomorphism, there exists a \D such that unbounded exogenous model generation returns true with input \I iff $\I\in\mathcal{S}$.
\end{corollary}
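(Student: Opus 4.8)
The plan is to combine the simulation established in the preceding theorem with the completeness result for \wn proved in \cite{aw/AbiteboulHV95}. First I would observe that a decidable class $\mathcal{S}$ of finite structures closed under isomorphism is precisely a computable Boolean query in the sense of \cite{aw/AbiteboulHV95}: it assigns to each finite input structure a truth value, it is isomorphism-invariant by the closure hypothesis, and it is computable by the decidability hypothesis. The key external ingredient is then the completeness of \wn, namely that every computable query is expressible by a \wn program. Instantiating this with the Boolean query corresponding to $\mathcal{S}$ yields a \wn program $\pi_{\mathcal{S}}$ which, run on an input structure \I, signals acceptance — say by making a designated nullary relation true — exactly when $\I\in\mathcal{S}$.

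Next I would translate $\pi_{\mathcal{S}}$ into a \clog theory $\D_{\mathcal{S}}$ using the constructions from the proof of the preceding theorem: the natural numbers together with the successor relation are produced by the \Cnew-process already exhibited, every symbol is given an extra time argument, each {\tt while} loop is rendered by a \Call-expression ranging over the created time domain, and each {\tt new} statement is rendered by a \Cnew-expression guarded by the loop variable. To the translation I would add a single rule that causes the propositional atom $P$ of the unbounded-querying inference whenever the designated acceptance relation of $\pi_{\mathcal{S}}$ becomes true. Since \wn is deterministic, the object-creation pattern fixes all creation functions, so the \chfun is essentially forced; consequently the existential quantifications over \J and over \ch in the definition of unbounded querying collapse to the single faithful run of $\pi_{\mathcal{S}}$ on \I.

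Finally I would verify, by induction on the number of loop iterations, that the time-indexed well-founded fixpoint of $\Apch$ for $\D_{\mathcal{S}}$ reproduces the successive database states of $\pi_{\mathcal{S}}$ on \I, so that $P$ is caused in the resulting model precisely when $\pi_{\mathcal{S}}$ accepts. This produces a witnessing structure \J exactly when $\I\in\mathcal{S}$, establishing that unbounded querying with $\D_{\mathcal{S}}$ returns true iff $\I\in\mathcal{S}$, which is the claim. The genericity side of the correspondence is automatic, as \clog already carries an isomorphism-invariant Tarskian semantics.

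The main obstacle I anticipate is the bookkeeping that bridges the two output conventions: a database query returns a relation, whereas unbounded querying asks for the \emph{existence} of a model in which the single atom $P$ holds. Pinning down which caused atom signals acceptance, and checking that the forced \chfun together with the behaviour of non-accepting runs (which simply never cause $P$) matches the ``returns false otherwise'' clause of the inference, is where the care is needed; everything else follows immediately from the simulation and from the completeness theorem of \cite{aw/AbiteboulHV95}.
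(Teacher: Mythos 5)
Your proposal is correct and follows exactly the route the paper intends: the paper's entire proof is the remark that the corollary ``follows immediately from \cite{aw/AbiteboulHV95}'' given the preceding simulation theorem, i.e., completeness of \wn for computable (isomorphism-invariant, decidable) Boolean queries composed with the \clog encoding of \wn. Your write-up merely makes explicit the details the paper leaves implicit (the acceptance atom $P$, the essentially forced \chfun, and the induction matching fixpoint stages to loop iterations), all of which are consistent with the paper's argument.
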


%
%

 \section{Conclusion}\label{sec:concl}
In this paper we presented several normal forms for \foclog. 
We showed that every \foclog theory can be transformed to a \voc-equivalent deterministic \foclog theory and 
to a \voc-equivalent \foclog theory in \nestnf or in \deff.
Furthermore, as  \foclog theories in \deff  correspond exactly to \foid, these transformations reduce inference for \foclog to \foid. 
We implemented a prototype of this above transformation, resulting in the first \foclog solver.
We also gave several complexity results for inference in \clog. 
All of these results are valuable from a theoretical point of view, as they help to characterise \foclog, but also from a practical point of view, as they provide more insight in \foclog.       
 \bibliography{krrlib}

\begin{thebibliography}{}

\bibitem[\protect\citeauthoryear{Abiteboul, Hull, and
  Vianu}{1995}]{aw/AbiteboulHV95}
Abiteboul, S.; Hull, R.; and Vianu, V.
\newblock 1995.
\newblock {\em Foundations of Databases}.
\newblock Addison-Wesley.

\bibitem[\protect\citeauthoryear{Bogaerts \bgroup et al\mbox.\egroup }{2014 in
  press}]{iclp/BogaertsVDV14}
Bogaerts, B.; Vennekens, J.; Denecker, M.; and {Van den Bussche}, J.
\newblock 2014, in press.
\newblock {C-Log}: A knowledge representation language of causality.
\newblock {\em Theory and Practice of Logic Programming (TPLP)}
  (Online-Supplement, Technical Communication ICLP14).

\bibitem[\protect\citeauthoryear{Bogaerts}{2014}]{url:IDP-CLog}
Bogaerts, B.
\newblock 2014.
\newblock {IDP-CLog}.
\newblock
  \url{http://dtai.cs.kuleuven.be/krr/files/software/various/idp-clog.tar.gz}.

\bibitem[\protect\citeauthoryear{{De Cat} \bgroup et al\mbox.\egroup
  }{2014}]{WarrenBook/DeCatBBD14}
{De Cat}, B.; Bogaerts, B.; Bruynooghe, M.; and Denecker, M.
\newblock 2014.
\newblock Predicate logic as a modelling language: The {IDP} system.
\newblock {\em CoRR} abs/1401.6312.

\bibitem[\protect\citeauthoryear{Denecker and
  Ternovska}{2008}]{tocl/DeneckerT08}
Denecker, M., and Ternovska, E.
\newblock 2008.
\newblock A logic of nonmonotone inductive definitions.
\newblock {\em ACM Transactions on Computational Logic (TOCL)}
  9(2):14:1--14:52.

\bibitem[\protect\citeauthoryear{Denecker, Bruynooghe, and
  Vennekens}{2012}]{DeneckerBV12}
Denecker, M.; Bruynooghe, M.; and Vennekens, J.
\newblock 2012.
\newblock Approximation fixpoint theory and the semantics of logic and answers
  set programs.
\newblock In Erdem, E.; Lee, J.; Lierler, Y.; and Pearce, D., eds., {\em
  Correct Reasoning}, volume 7265 of {\em Lecture Notes in Computer Science}.
  Springer.

\bibitem[\protect\citeauthoryear{Denecker}{2012}]{ASPOCP/Denecker12}
Denecker, M.
\newblock 2012.
\newblock The {FO($\cdot$)} knowledge base system project: {A}n integration
  project (invited talk).
\newblock In {\em ASPOCP}.

\bibitem[\protect\citeauthoryear{Kleene}{1938}]{Kleene38}
Kleene, S.~C.
\newblock 1938.
\newblock On notation for ordinal numbers.
\newblock {\em The Journal of Symbolic Logic} 3(4):pp. 150--155.

\bibitem[\protect\citeauthoryear{Preyer and Peter}{2002}]{PreyerP02}
Preyer, G., and Peter, G.
\newblock 2002.
\newblock {\em Logical Form and Language}.
\newblock Clarendon Press.

\bibitem[\protect\citeauthoryear{Vennekens \bgroup et al\mbox.\egroup
  }{2007}]{VennekensMWD07a}
Vennekens, J.; Mari{\"e}n, M.; Wittocx, J.; and Denecker, M.
\newblock 2007.
\newblock Predicate introduction for logics with a fixpoint semantics. {Part
  I}: Logic programming.
\newblock {\em Fundamenta Informaticae} 79(1-2):187--208.

\end{thebibliography}

\end{document}